\providecommand{\U}[1]{\protect\rule{.1in}{.1in}}
\newtheorem{theorem}{Theorem}
\newtheorem{corollary}[theorem]{Corollary}
\newtheorem{lemma}[theorem]{Lemma}
\newtheorem{proposition}[theorem]{Proposition}
\newenvironment{proof}[1][Proof]{\noindent\textbf{#1.} }{\ \rule{0.5em}{0.5em}}
\begin{document}

\title {Blow-up rate estimates for the solutions of the bosonic
Boltzmann-Nordheim equation}

\author{J. Bandyopadhyay\thanks{University of California, Davis, Department of Mathematics, One Shields Avenue,
Davis, CA 95616, USA. E-mail: jogiab@gmail.com}, J. J. L. Vel\'{a}zquez\thanks{Institute for Applied Mathematics, University of Bonn,
Endenicher Allee 60, 53115 Bonn, Germnay. E-mail: velazquez@iam.uni-bonn.de}}

\maketitle

\begin{abstract}
In this paper we study the behavior of a class of mild solutions of the homogeneous and isotropic bosonic Boltzmann-Nordheim equation near the blow-up. We obtain some estimates on the blow-up rate of the solutions and prove that, as long as a solution is bounded above by the critical singularity $\frac{1}{x}$ (the equilibrium solutions behave like this power law near the origin), it remains bounded in the uniform norm. In the last section of the paper, we also prove a local existence result for a class of measure-valued mild solutions, which allows us to solve the Boltzmann-Nordheim equation for some classes of unbounded densities.

\end{abstract}

\section{Introduction}

\bigskip

The quantum Boltzmann equation or Boltzmann-Nordheim equation describes the
dynamics of a dilute gas of quantum particles.

In the bosonic, spatially homogeneous case, the Boltzmann-Nordheim equation reads:%

\begin{align}
\partial_{t}F_{1}  &  =\int_{\mathbb{R}^{3}}\int_{\mathbb{R}^{3}}%
\int_{\mathbb{R}^{3}}q\left(  F\right)  \mathcal{M}d^{3}p_{2}d^{3}p_{3}%
d^{3}p_{4}\ \ ,\ \ p_{1}\in\mathbb{R}^{3}\ \ ,\ \ t>0\label{E2}\\
F_{1}\left(  0,p\right)   &  =F_{0}\left(  p\right)  \ \ ,\ \ p_{1}%
\in\mathbb{R}^{3} \label{E2b}%
\end{align}

\begin{align}
q\left(  F\right)   &  =q_{3}\left(  F\right)  +q_{2}\left(  F\right)
\ \ \ \ ,\ \ \ \ \ \epsilon=\frac{\left\vert p\right\vert ^{2}}{2}%
\label{T4E1a}\\
\mathcal{M}  &  =\mathcal{M}\left(  p_{1},p_{2};p_{3},p_{4}\right)
=\delta\left(  p_{1}+p_{2}-p_{3}-p_{4}\right)  \delta\left(  \epsilon
_{1}+\epsilon_{2}-\epsilon_{3}-\epsilon_{4}\right)  \label{T4E1b}%
\end{align}%
\begin{align}
q_{3}\left(  F\right)   &  =F_{3}F_{4}\left(  F_{1}+F_{2}\right)  -F_{1}%
F_{2}\left(  F_{3}+F_{4}\right) \label{Q1E1}\\
q_{2}\left(  F\right)   &  =F_{3}F_{4}-F_{1}F_{2} \label{Q1E2}%
\end{align}
where we use the notation $F_{j}=F\left(  t,p_{j}\right)  ,\ j\in
\mathbb{R}^{3}.$ The mass of the particles is normalized to one.

\bigskip

If the distributions are isotropic in addition to being spatially homogeneous, the Boltzmann-Nordheim equation can be
written, after a suitable change in the time scale, as:%
\begin{align}
&  \partial_{t}f\left(  x\right)  +f\left(  x\right)  \int_{0}^{\infty}%
\int_{0}^{\infty}dydzW\left(  x,y,z,w\right)  f\left(  w\right)  \left(
f\left(  y\right)  +f\left(  z\right)  +1\right)  \boldsymbol{1}_{\left\{
\left(  w=y+z-x\right)  \geq0\right\}  }\label{Is1E1}\\
&  =\int_{0}^{\infty}\int_{0}^{\infty}dydzW\left(  x,w,y,z\right)  f\left(
y\right)  f\left(  z\right)  \left(  f\left(  x\right)  +f\left(  w\right)
+1\right)  \boldsymbol{1}_{\left\{  \left(  w=y+z-x\right)  \geq0\right\}
},\ \nonumber
\end{align}
where
\begin{equation}
W\left(  x,w,y,z\right)  =\frac{\min\left\{  \sqrt{x},\sqrt{w},\sqrt{y}%
,\sqrt{z}\right\}  }{\sqrt{x}} \label{IsE3}%
\end{equation}
and we have dropped the dependence of $f$ and related functions in $t$ for notational
simplicity. 

The function $f$ is not the density of particles in the space of energy. Such
density is given, up to an irrelevant multiplicative constant, by:%
\[
g\left(  x\right)  =\sqrt{x}f\left(  x\right)
\]

Then (\ref{Is1E1}), (\ref{IsE3}) become:%
\begin{align}
&  \partial_{t}g\left(  x\right)  +g\left(  x\right)  \int_{0}^{\infty}%
\int_{0}^{\infty}dydz\frac{\Phi\left(  x,y,z,w\right)  }{\sqrt{xw}}g\left(
w\right)  \left(  \frac{g\left(  y\right)  }{\sqrt{y}}+\frac{g\left(
z\right)  }{\sqrt{z}}+1\right)  \boldsymbol{1}_{\left\{  \left(
w=y+z-x\right)  \geq0\right\}  }\label{S2E1}\\
&  =\int_{0}^{\infty}\int_{0}^{\infty}dydz\Phi\left(  x,w,y,z\right)
\frac{g\left(  y\right)  g\left(  z\right)  }{\sqrt{yz}}\left(  \frac{g\left(
x\right)  }{\sqrt{x}}+\frac{g\left(  w\right)  }{\sqrt{w}}+1\right)
\boldsymbol{1}_{\left\{  \left(  w=y+z-x\right)  \geq0\right\}  }\nonumber
\end{align}%
\begin{equation}
\Phi\left(  x,w,y,z\right)  =\min\left\{  \sqrt{x},\sqrt{w},\sqrt{y},\sqrt
{z}\right\}  \label{S2E2}%
\end{equation}

The stationary or equilibrium solutions of (\ref{Is1E1}), (\ref{IsE3}) with
finite mass are
the Bose-Einstein distributions:%
\begin{equation}
g_{BE}\left(  x\right)  =m_{0}\delta\left(  x\right)  +\frac{\sqrt{x}}%
{\exp\left(  \beta x+\alpha\right)  -1}\ \label{S2E3}%
\end{equation}
where $m_{0}\geq0,$ $\beta\in\left(  0,\infty\right]  ,$ $0\leq\alpha<\infty$
and $\alpha\cdot m_{0}=0.$ 
In \cite{Lu1} and \cite{Lu2} weak solutions of (\ref{Is1E1}) were defined and studied: it was shown that conservative (i.e. conserving both mass and energy) weak solutions of (\ref{Is1E1}) exist and converge to the physically expected equilibrium distributions. More recently, in \cite{EV1} and \cite{EV3} mild solutions of (\ref{Is1E1}) were considered and well-posedness results were proved for these.

\bigskip

A remarkable feature of the stationary solutions of (\ref{S2E3}) described above is the
possibility of having a positive macroscopic fraction of particles with energy
$x=0.$ The onset of such macroscopic fraction of particles is the phenomenon
of Bose-Einstein condensation. We will term as Bose-Einstein condensate any
nonzero Dirac measure appearing in $g$ at $x=0.$

It has been conjectured in \cite{LLPR} and \cite{ST2} that the solutions of
(\ref{Is1E1}), (\ref{IsE3}) might blow-up in finite time, and that such
blow-up would mark the onset of Bose-Einstein condensate, or more precisely
in this setting, a macroscopic fraction of particles with zero energy. This
conjecture was supported by numerical simulations which supported also the
picture of self-similar behaviour for the solutions at the onset of the
singularity. Rigorous results concerning singularity formation in finite time
have been obtained in \cite{EV1}, \cite{EV2}. The main goal of this paper is to derive some quantitative
information about the behaviour of the solutions of (\ref{Is1E1}),
(\ref{IsE3}) near a blow-up point. \bigskip

\bigskip

Associated with (\ref{Is1E1}), (\ref{IsE3}) are several power laws which have some physical significance and have been considered in the mathematical and physical literature.
These power laws describe several possible asymptotic behaviours of the solutions of (\ref{Is1E1}) near the origin, i.e., as $x\rightarrow0$ . We briefly review them below.

\bigskip

We first describe three power laws associated with the steady-state solutions of (\ref{Is1E1}).  While one of them gives the asymptotic behavior of
stationary solutions near the origin, the other two power laws are associated with non-zero fluxes of mass and energy to or from the origin. In other words, each of these latter two power laws is associated with one of the two conservation laws (for mass and energy) of (\ref{Is1E1}). The corresponding exponents are called the Kolmogorov-Zakharov exponents and they are related to Kolmogorov theory of wave turbulence \cite{DNPZ}.

The power law $f\left(  x\right)  \approx\frac{1}{x}$ yields the asymptotics
of the stationary solutions (\ref{S2E3}) as $x\rightarrow0$ if $\alpha=0.$ It
has been indicated also in \cite{Spohn} that this asymptotic should give the
behaviour of $f$ solution of (\ref{Is1E1}), (\ref{IsE3}) in the presence of a
condensate. This power law describes the local behaviour of the non condensed
part of a distribution $f$ which is at equilibrium with a condensate.

The power law $f\left(  x\right)  \approx\frac{1}{x^{\frac{7}{6}}}$ describes
particle distributions yielding a particle flux towards $x=0 $ and the exponent $-\frac{7}{6}$ is associated to the mass
conservation \cite{DNPZ}. Singular solutions behaving like this power law were investigated in \cite{EMV1} and  \cite{EMV2}. In \cite{EMV2}, it was proved that solutions with initial data behaving like $x^{-7/6}$ exist locally and preserve the singularity. The total mass is not conserved in this case and one could
possibly think of the mass going into the formation of the condensate, without any back reaction .


The Kolmogorov-Zakharov power law associated to the energy conservation law is
$f\left(  x\right)  \approx\frac{1}{x^\frac{3}{2}}.$ However, this exponent is also the
critical exponent for which the number of particles per unit volume becomes
divergent. A consequence of this is that several integrals in (\ref{Is1E1}),
(\ref{IsE3}) become divergent and it it is not clear in which sense solutions
of this equation with this asymptotic behaviour as $x\rightarrow0$ are meaningful
solutions to (\ref{Is1E1}), (\ref{IsE3}).  So it is not understood in which
sense solutions with this behaviour can be given the meaning of solutions
yielding a nonzero flux of energy from or towards $x=0.$

An interesting power law behaviour is $f\left(  x\right)
\approx\frac{1}{x^{\nu}},$ $\ $with $\nu=1.234.$ This exponent has been
computed numerically. It corresponds to the power law which is generated at
the blow-up time for the distribution $f,$ taking as initial data a bounded
distribution $f_{0}$ with finite mass and energy. This exponent is different
from the previous ones and it cannot be obtained by means of dimensional
considerations, as it is the case of the other exponents.  Actually this exponent appears as the solution of a nonlinear
eigenvalue problem, as it is usually the case in problems with self-similarity
of the second kind (cf. \cite{LLPR}).

It has been argued often in the physical literature that at the onset of the singularity the exponent $\nu$ should be larger than $\frac{7}{6}$, since otherwise there would not be particle-fluxes towards the origin. However mathematically this has not been proven yet and the role of the exponent $-\frac{7}{6}$ in the asymptotic
behavior of the solutions at the blow-up time near the origin is still not clear.

\bigskip
It has been proved in \cite{EV1}, \cite{EV2} that a class of solutions of (\ref{Is1E1}), (\ref{IsE3})
with bounded initial data blow up in finite time, leading to condensation for supercritical solutions. In \cite{Lu4} it has been shown that if the initial datum is singular enough at the origin, then
a condensate is present at every later time, even if initially there is no condensate. A critical role is played by the power law $f(x)=\frac{1}{x}$. The main result in \cite{Lu3} implies that if the initial datum $f_0(x)=f(x,0)$ behaves like $\frac{1}{x^{\beta}}$ with $\beta>1$, then there is condensation at all times $t>0$, even if there is none initially.  In \cite{Lu3}
more information has been obtained about the role of the power law $\frac{1}{x}$ in condensation and a regularity result has been proved for solutions bounded above by this critical power law.  In this paper, our main goal is to further clarify its role in the blow-up of mild solutions that are initially bounded. As noted in \cite{Lu3}, if the $L^1$-norm of a solution is bounded then that implies that the uniform norm of the solution is bounded. We include a different proof of this for a class of solutions with weighted supremum norms in this paper. We then obtain, as a corollary, a lower blow-up rate for the $L^1$-norm. Our next result shows that if the initial datum is bounded, then as long as the solution is bounded above by the critical singularity $\frac{1}{x}$, it remains bounded in the uniform norm. In the end we include a local existence theorem for measure-valued mild solutions that have
bounded variation in compact intervals in an exponentially-weighted space.

\bigskip
Our results are presented in the two following sections, section $2$ contains our results and estimates relating to the blow-up of solutions of (\ref{Is1E1}), while section $3$ contains a local existence result for a class of $L^1$-data for (\ref{Is1E1}), as follows:

(2) We prove that as long as the $L^{1}$ norm of $f$
remains bounded, any solution of the BN equation which is initially bounded in a suitably weighted supremum norm ,
remains bounded. This implies that near the blow-up the $L^{1}$  must become unbounded locally in some neighborhood of the origin.
This gives, as a corollary, a Gronwall-type lower estimate on how this localized $L^1$ norm blows up in time. We then prove, as an implication of a result proved in \cite{Lu4} that if initial datum is bounded, then as long as the solution is bounded above by the critical singularity $\frac{1}{x}$, it remains bounded in the uniform norm.

(3)  We study solutions of the BN equation with singular data. We prove a local existence result for solutions of 
the BN equation in a large class of measures including those that have initial density $f_0$
such that $f_{0}\left(  x\right)  \leq\frac{1}{x^{\alpha}}$ with $\alpha<1$.
More precisely, $f_{0}\in L_{loc}^{1}\left(  0,\infty\right)  $ allows us to
obtain solutions defined locally in time, which belong to a similar class. The derived solutions are
singular near the origin, but they do not have any condensate and they are conservative.

\section{Estimates near the Blow-up}

We consider the following evolution equation for a given distribution $f(x,.)$
on $\mathbb{R}_{+}$:%

\begin{align}
\label{BN} &  & \partial_{t} f(x)+ f(x)\int_{0}^{\infty}\int_{0}^{\infty} dydz
W(x,w,y,z)f(w)\left(  f(y)+f(z)+1 \right)  \mathbbm{1}(w=y+z-x\geq0)\\
&  {} & = \int_{0}^{\infty}\int_{0}^{\infty} dydz W(x,w,y,z)f(y)f(z)\left(
f(x)+f(w)+1\right)  \mathbbm{1}(w=y+z-x\geq0) ,\nonumber
\end{align}
where
\[
W(x,w,y,z)=\frac{min(\sqrt{x},\sqrt{w},\sqrt{y},\sqrt{z})}{\sqrt{x}},
\]
and we have suppressed the time-dependence in the function $f(x,t)$ for
convenience. We now define a mild solution of (\ref{BN}) as a function
$f(x,t)$ satisfying the following equation:
\begin{align}
\label{mildsoln}f(x,t)=f(x,0)\exp\left(  -\int_{0}^{t} a[f](x,s)ds\right)
+\int_{0}^{t} ds \exp\left(  -\int_{s}^{t} d\xi a[f](x,\xi)\right)  J[f](x,s),
\end{align}
where
\begin{align}
\label{adef}
a[f](x)  &  :=\int_{0}^{\infty}\int_{0}^{\infty} dydz W(x,w,y,z)f(w)\left(
f(y)+f(z)+1\right)  \mathbbm{1}(w=y+z-x\geq0)
\end{align}
\begin{align}
\label{Jdef}
J[f](x)  &  :=\int_{0}^{\infty}\int_{0}^{\infty} dydz
W(x,w,y,z)f(y)f(z)\left(  f(x)+f(w)+1\right)  \mathbbm{1}(w=y+z-x\geq0).
\end{align}

We look at mild solutions of the BN equation which are integrable and
have a suitable decay at infinity, i.e., mild solutions which are allowed to have a sub-linear singularity near the origin but sufficient decay at infinity. More precisely, we consider non-negative
functions bounded in the following weighted sup-norm:
\begin{align}
\label{wsnorm}
||f(.,t)||^{\alpha,\gamma}_{\infty}=\sup_{x\geq0} \quad x^{\alpha}(1+x)^{\gamma}f(x,t),
\end{align}
where the exponent $\gamma>0$ is chosen to be a large number (the choice is
dictated by the estimates in the theorem below) and $\alpha\in[0,1)$. We denote the mass and energy associated with such a mild solution by $m(f)$ and $e(f)$ respectively. 

Now, given any time $T$ we define the following sup-norm in time:
\begin{align}
\label{twsnorm}
\phi(T):=\sup_{0\leq t \leq T}||f(.,t)||^{\alpha,\gamma}_{\infty}%
\end{align}

The following theorem shows that solutions in these classes stay bounded in their respective weighted sup-norms mentioned above, as long as the $L^1$-norm remains bounded.

\begin{theorem}
Consider a conservative mild solution of the BN equation which is initially bounded in the weighted sup norm $||f(.,t)||^{\alpha,\gamma}_{\infty}$ defined in (\ref{wsnorm}). Then, as long as there is some time $T>0$ such that $c(T)=\sup_{0\leq t \leq T}\int_0^{\infty} f(x,t)dx$ is bounded, there exists some constant $\kappa>0$, depending only on
$T$, $c(T)$, $\alpha$, $\gamma$ and the energy $e(f)$, such that the following bound holds:
$$\phi(T)\leq\frac{1}{\kappa}\phi(0),$$
and consequently, for subsequent times one obtains, for $0\leq n<\infty$:
$$\phi(nT)\leq \frac{1}{\kappa^n}\phi(0),$$
where $\phi(T)$ is the sup norm defined in (\ref{twsnorm}).

\end{theorem}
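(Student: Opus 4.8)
The plan is to estimate the time-weighted norm $\phi(t)$ directly from the mild formulation (\ref{mildsoln}) and to close a Gronwall inequality on $[0,T]$. Multiplying (\ref{mildsoln}) by the weight $x^{\alpha}(1+x)^{\gamma}$ and using $a[f]\geq 0$, one obtains, for every $x$ and every $t\leq T$,
$$x^{\alpha}(1+x)^{\gamma}f(x,t)\leq\phi(0)+x^{\alpha}(1+x)^{\gamma}\int_0^t e^{-\int_s^t a[f](x,\xi)\,d\xi}\,J[f](x,s)\,ds,$$
so everything reduces to estimating the last term. I will bound it by $C\int_0^t\phi(s)\,ds+\varepsilon\,\phi(t)$ with $\varepsilon<1$ (to be absorbed on the left), where $C$ depends only on $T,c(T),\alpha,\gamma$ and the conserved mass $m(f)$ and energy $e(f)$. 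Two elementary observations keep all auxiliary quantities under control: $m(f)=\int\sqrt{x}\,f\,dx\leq c(T)^{2/3}e(f)^{1/3}$ by Hölder interpolation, and $\int_0^{\infty}f(x,s)\,dx\leq C(\alpha,\gamma)\,\|f(\cdot,s)\|^{\alpha,\gamma}_{\infty}$ since $\alpha<1$ and $\gamma$ is large, so every integral of a power of $f$ that appears is controlled by $T$, $c(T)$, $e(f)$, $\alpha$, $\gamma$.

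The core is the pointwise estimate of $x^{\alpha}(1+x)^{\gamma}J[f](x,s)$. I split $J=J_{(x)}+J_{(w)}+J_{(1)}$ according to the three summands $f(x)$, $f(w)$, $1$ in (\ref{Jdef}), and in each piece I keep exactly one factor of $f$ to be measured in the weighted sup-norm (producing the factor $\phi(s)$), while the remaining one or two factors are absorbed, after integrating out the constraint $w=y+z-x$, into $\int f\,dx\leq c(T)$, the mass $m(f)$, or the energy $e(f)=\int x^{3/2}f\,dx$. For $J_{(x)}$ this is immediate from $W\leq 1$: $x^{\alpha}(1+x)^{\gamma}J_{(x)}[f](x,s)\leq\bigl(x^{\alpha}(1+x)^{\gamma}f(x,s)\bigr)\bigl(\int f\,dx\bigr)^2\leq c(T)^2\,\|f(\cdot,s)\|^{\alpha,\gamma}_{\infty}$, for every $x$. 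For $J_{(w)}$ and $J_{(1)}$ one does a case analysis according to which of $\sqrt{x},\sqrt{w},\sqrt{y},\sqrt{z}$ realizes the minimum in $W$ and, crucially, according to the size of $x$. In a bounded range $x\leq x_{\ast}$ (with $x_{\ast}$ chosen in terms of $m(f),e(f)$) the weight $(1+x)^{\gamma}$ is itself bounded, the $1/\sqrt{x}$ in $W$ is harmless because it is always matched by a square root in the numerator (so $W\leq 1$), the singularity of $x^{\alpha}f(x)$ at the origin is absorbed by the factor $x^{\alpha}$, and the remaining $y$- and $z$-integrals of $f$ against $y^{-\alpha}(1+y)^{-\gamma}$ converge because $\alpha<1$ and $\gamma$ is large; this produces a bound $C(x_{\ast},c(T),\alpha,\gamma)\,\phi(s)$, hence a contribution $C\int_0^t\phi(s)\,ds$ after bounding the exponential by $1$ and integrating in $s$.

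The delicate regime, and the main obstacle, is $x>x_{\ast}$, where $J_{(1)}$ genuinely degrades the tail: the constraint $y+z\geq x$ only forces $\int_0^x(\sqrt{y}/\sqrt{x})f(y)\,dy\cdot\int_x^{\infty}f(z)\,dz\lesssim(m(f)/\sqrt{x})\,\|f(\cdot,s)\|^{\alpha,\gamma}_{\infty}\,x^{1-\alpha-\gamma}$, which is one power of $x$ worse than what $x^{\alpha}(1+x)^{\gamma}$ can absorb, so the exponential damping must be retained here. I would first prove the lower bound $a[f](x,s)\geq\tfrac14 m(f)\sqrt{x}$ for all $s$ and all $x\geq x_{\ast}:=4e(f)/m(f)$: it comes from the "$\,\cdot\,1$" part of $a$, since for $w\leq x/2$ and $y\in[w,x]$ one has $W=\sqrt{w}/\sqrt{x}$, whence $\int_0^{w+x}W\,dy\geq\tfrac12\sqrt{wx}$, while $\int_0^{x/2}\sqrt{w}\,f(w,s)\,dw\geq m(f)-2e(f)/x\geq m(f)/2$, using conservation of $m(f)$ and $e(f)$. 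Consequently the offending part of $J[f](x,s)$ is $\leq C(\alpha,\gamma)\,x^{-\alpha-\gamma}\,\|f(\cdot,s)\|^{\alpha,\gamma}_{\infty}\,a[f](x,s)$, so its Duhamel integral is at most
$$C(\alpha,\gamma)\,x^{-\alpha-\gamma}\,\phi(t)\int_0^t e^{-\int_s^t a[f](x,\xi)\,d\xi}\,a[f](x,s)\,ds\leq C(\alpha,\gamma)\,x^{-\alpha-\gamma}\,\phi(t),$$
and after multiplication by $x^{\alpha}(1+x)^{\gamma}$ this is $\leq C(\alpha,\gamma)\,\phi(t)$ with $C(\alpha,\gamma)\to 0$ as $\gamma\to\infty$ — which is precisely why $\gamma$ must be taken large. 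The remaining pieces of $J_{(w)}$ and $J_{(1)}$ for $x>x_{\ast}$ carry enough decay, once $e(f)$ is used on one of the integrated factors, to be bounded (without the damping) by $C(c(T),e(f),\alpha,\gamma)\,x^{-1/2}\,\phi(s)$, and therefore feed into $C\int_0^t\phi(s)\,ds$ as well.

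Collecting all the bounds gives, for every $t\leq T$, an inequality of the form $\phi(t)\leq\phi(0)+\varepsilon\,\phi(t)+C\bigl(T,c(T),e(f),\alpha,\gamma\bigr)\int_0^t\phi(s)\,ds$ with $\varepsilon<\tfrac12$ for $\gamma$ large; absorbing the middle term and applying Gronwall's lemma yields $\phi(T)\leq 2\phi(0)\,e^{2CT}$, i.e. $\phi(T)\leq\kappa^{-1}\phi(0)$ with $\kappa:=\tfrac12 e^{-2CT}$ depending only on $T,c(T),e(f),\alpha,\gamma$ (and, through $m(f)$, only on these). Rerunning the argument on $[nT,(n+1)T]$ with $\phi(nT)$ in place of $\phi(0)$, and using that $e(f)$ is conserved and $c$ stays bounded, gives $\phi((n+1)T)\leq\kappa^{-1}\phi(nT)$, hence $\phi(nT)\leq\kappa^{-n}\phi(0)$ by induction. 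The one point that needs care in the write-up is justifying the Gronwall step for the mild (non-differentiable) solution, which is handled by a standard continuation/bootstrap argument on the maximal interval where $\phi$ is finite, noting that every estimate above only ever uses the finiteness of $c(T)$, $e(f)$ and $\phi(0)$.
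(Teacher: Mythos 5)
Your proposal takes essentially the same route as the paper: both bound $\phi$ from the mild formulation, split $J[f]$ into the three pieces carrying $f(x)$, $f(w)$ and $1$, estimate the first two crudely via $c(T)$ and the weighted sup norm, and then, for the delicate ``$1$''-piece at large $x$, use a lower bound $a[f](x,s)\gtrsim m(f)\sqrt{x}$ (yours with the cutoff $x\geq x_*=4e(f)/m(f)$, the paper from an explicit decomposition valid for all $x$) to exploit the exponential damping in the Duhamel integral, with $\gamma$ taken large to make the resulting coefficient small. The one genuinely different ingredient is the closing step: you close by a Gronwall inequality $\phi(t)\le\phi(0)+\varepsilon\phi(t)+C\int_0^t\phi(s)\,ds$, valid on any $[0,T]$ with $c(T)<\infty$, while the paper chooses $T$ small so that the coefficient multiplying $\phi(T)$ on the right is $<1$ and then iterates over intervals of length $T$. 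Your variant is slightly cleaner (it decouples the size of $T$ from the choice of $\gamma$), but it buys nothing essential: the estimates used are the same, and the iteration is still needed for $\phi(nT)\le\kappa^{-n}\phi(0)$.

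There is, however, a slip in the key ``delicate regime'' step that, as written, breaks the argument. You assert that the offending part of $J[f]$ is $\le C(\alpha,\gamma)\,x^{-\alpha-\gamma}\,\|f(\cdot,s)\|^{\alpha,\gamma}_{\infty}\,a[f](x,s)$, conclude that its Duhamel integral is $\le C(\alpha,\gamma)\,x^{-\alpha-\gamma}\,\phi(t)$, and then claim that after multiplication by $x^{\alpha}(1+x)^{\gamma}$ the result is $\le C(\alpha,\gamma)\,\phi(t)$ with $C(\alpha,\gamma)\to 0$ as $\gamma\to\infty$. But $x^{\alpha}(1+x)^{\gamma}\cdot x^{-\alpha-\gamma}=((1+x)/x)^{\gamma}$, and $\sup_{x\ge x_*}((1+x)/x)^{\gamma}=(1+1/x_*)^{\gamma}$ grows \emph{exponentially} in $\gamma$, while the factor $C(\alpha,\gamma)$ you actually gain from $\int_x^{\infty}(1+z)^{-\gamma}\,dz$ is only $\sim 1/(\gamma-1)$; the product therefore does not tend to zero. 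The correct bookkeeping is to keep $(1+x)$ in the tail integral, $\int_x^{\infty}f(z,s)\,dz\le \|f(\cdot,s)\|^{\alpha,\gamma}_{\infty}\,x^{-\alpha}(1+x)^{1-\gamma}/(\gamma-1)$, so that after multiplying by the weight the $(1+x)^{\gamma}$ cancels exactly and what survives is $(1+x)/x\cdot 1/(\gamma-1)\le 2/(\gamma-1)$ uniformly for $x\ge 1$; one still needs the Duhamel damping to kill the extra $m(f)/\sqrt{x}$ factor, as you intended. This is exactly what the paper's $\mu$-split achieves, with the explicit bound $\approx e/(\gamma-1)$ obtained from $\mu=1-1/\gamma$. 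The idea in your argument is right; the specific form $x^{-\alpha-\gamma}$ is not, and the claim that the coefficient decays in $\gamma$ needs this correction to be true.
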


\begin{proof}

Recalling the definition of mild solutions, we obtain the following upper
bound for $\phi(T)$:%

\begin{align*}
\phi(T) & \leq\sup_{0\leq t \leq T}\sup_{x\geq0}\left[  \exp\left(  -\int
_{0}^{t} a[f](x,s)ds\right) x^{\alpha}(1+x)^{\gamma}f(x,0)\right. \\
& +\left. x^{\alpha}(1+x)^{\gamma}\int_{0}^{t} ds \exp\left(  -\int_{s}^{t} d\xi
a[f](x,\xi)\right)  J[f](x,s) \right] \\
& = I_{1}+I_{2},
\end{align*}

where:
\begin{align*}
I_{1} & =\sup_{0\leq t \leq T}\sup_{x\geq0}\exp\left(  -\int_{0}^{t}
a[f](x,s)ds\right) x^{\alpha}(1+x)^{\gamma}f(x,0)\\
& \leq\sup_{0\leq t \leq T}\sup_{x\geq0} \quad x^{\alpha}(1+x)^{\gamma}f(x,0)\\
& =\phi(0),
\end{align*}

and $I_{2}$ is written as a sum of three terms as follows:
\begin{align*}
I_{2} & =\sup_{0\leq t \leq T}\sup_{x\geq0} \quad x^{\alpha}(1+x)^{\gamma}\int_{0}^{t} ds
\exp\left(  -\int_{s}^{t} d\xi a[f](x,\xi)\right)  J[f](x,s)\\
& \leq J_{1}+J_{2}+J_{3},
\end{align*}
where:
\begin{align*}
& J_{1}\\
& = \sup_{0\leq t \leq T}\sup_{x\geq0} \quad x^{\alpha}(1+x)^{\gamma}\int_{0}^{t} ds
\exp\left(  -\int_{s}^{t} d\xi a[f](x,\xi)\right) \int_{0}^{\infty}dy dz
W(w,x,y,z)f(x)f(y)f(z) \mathbbm{1}(y+z\geq x),
\end{align*}

\begin{align*}
& J_{2}\\
& =\sup_{0\leq t \leq T}\sup_{x\geq0} \quad x^{\alpha}(1+x)^{\gamma}\int_{0}^{t} ds
\exp\left(  -\int_{s}^{t} d\xi a[f](x,\xi)\right) \int_{0}^{\infty}dy dz
W(w,x,y,z)f(w)f(y)f(z) \mathbbm{1}(y+z\geq x),
\end{align*}

and
\begin{align*}
& J_{3}\\
& =\sup_{0\leq t \leq T}\sup_{x\geq0} \quad x^{\alpha}(1+x)^{\gamma}\int_{0}^{t} ds
\exp\left(  -\int_{s}^{t} d\xi a[f](x,\xi)\right) \int_{0}^{\infty}dy dz
W(w,x,y,z)f(y)f(z) \mathbbm{1}(y+z\geq x).
\end{align*}

In the above formulae $w=y+z-x$. We now estimate these three terms as follows:

i)%

\begin{align*}
& J_{1}\\
& = \sup_{0\leq t \leq T}\sup_{x\geq0} \quad x^{\alpha}(1+x)^{\gamma}\int_{0}^{t} ds
\exp\left(  -\int_{s}^{t} d\xi a[f](x,\xi)\right) \int_{0}^{\infty}\int
_{0}^{\infty}dy dz W(w,x,y,z)f(x)f(y)f(z) \mathbbm{1}(y+z\geq x)\\
& \leq\left( \sup_{0\leq t \leq T} ||f||_{\infty} \right)  \sup_{0\leq t \leq
T}\sup_{x\geq0} \int_{0}^{t} ds \exp\left(  -\int_{s}^{t} d\xi a[f](x,\xi
)\right) \int_{0}^{\infty}\int_{0}^{\infty}dy dz f(y) f(z)\\
& \leq\phi(T)\sup_{0\leq t \leq T}\int_{0}^{t}ds\left(  \int_{0}^{\infty}dy
f(y) \right) ^{2}\\
& = \phi(T)\sup_{0\leq t \leq T}\int_{0}^{t}ds \left( l[f](s) \right) ^{2}\\
& \leq T\phi(T)c(T)^2,
\end{align*}

where
\[
l[f](s)=\int_{0}^{\infty} dy f(y,s),
\]
and
\[
c(T)=\sup_{0\leq t \leq T}l[f](t).
\]

\bigskip
ii)
\begin{align*}
& J_{2}\\
& = \sup_{0\leq t \leq T}\sup_{x\geq0} \quad x^{\alpha}(1+x)^{\gamma}\int_{0}^{t} ds
\exp\left(  -\int_{s}^{t} d\xi a[f](x,\xi)\right) \int_{0}^{\infty}\int
_{0}^{\infty}dy dz W(w,x,y,z)f(w)f(y)f(z) \mathbbm{1}(y+z\geq x)\\
& =\sup_{0\leq t \leq T}\sup_{x\geq0} \quad2 x^{\alpha} (1+x)^{\gamma}\int_{0}^{t} ds
\exp\left(  -\int_{s}^{t} d\xi a[f](x,\xi)\right) \int_{0}^{\infty}\int
_{\frac{x}{2}}^{\infty}dy dz\\
& {}\qquad\qquad\qquad W(w,x,y,z)f(w)f(y)f(z) \mathbbm{1}(y>z)
\mathbbm{1}(y+z\geq x)\\
& \leq\phi(T)\sup_{0\leq t \leq T}\sup_{x\geq0}\quad2 x^{\alpha}(1+x)^{\gamma}\int
_{0}^{t} ds \int_{0}^{\infty}\int_{\frac{x}{2}}^{\infty} dy dz (1+y)^{-\gamma}
f(w)f(z) \mathbbm{1}(y+z\geq x)\\
& \leq\phi(T)\sup_{0\leq t \leq T}\sup_{x\geq0}\quad2^{\gamma+\alpha+1}\left(
\frac{1+x}{2+x}\right) ^{\gamma}\int_{0}^{t} ds \int_{0}^{\infty}\int
_{\frac{x}{2}}^{\infty} dy dz f(w)f(z)\\
& \leq2^{\alpha+\gamma+1}\phi(T)\sup_{0\leq t \leq T} \int_{0}^{t} ds \int
_{0}^{\infty}\int_{0}^{\infty}dz dw f(w)f(z)\\
& \leq2^{\alpha+\gamma+1}T\phi(T)(c(T))^{2}.
\end{align*}
\bigskip

iii) The term $J_{3}$ requires a bit more care in its evaluation. In fact, the
estimate in the regime of small values of $x$ will be treated differently from
the estimate in the regime of large values.

\bigskip

When $x$ is small, say $x\leq1$, we have the following estimate:
\begin{align*}
& J_{3}\\
& = \sup_{0\leq t \leq T}\sup_{0\leq x \leq1} \quad x^{\alpha}(1+x)^{\gamma}\int_{0}^{t}
ds \exp\left(  -\int_{s}^{t} d\xi a[f](x,\xi)\right) \int_{0}^{\infty}\int
_{0}^{\infty}dy dz W(w,x,y,z)f(y)f(z) \mathbbm{1}(y+z\geq x)\\
& \leq\phi(T)\sup_{0\leq t \leq T}\sup_{0\leq x \leq1}\quad x^{\alpha}(1+x)^{\gamma}
\int_{0}^{t} ds \int_{0}^{\infty} dy f(y)\left[\int_{0}^{1} dz z^{-\alpha
} + \int_1^{\infty} dz (1+z)^{-\gamma}\right]\\
&\leq 2^{\gamma}\frac{2-\alpha}{1-\alpha}Tc(T)\phi(T).
\end{align*}

\bigskip
\bigskip

On the other hand, when $x>1$ we have, for any number $\mu$ (to be chosen in
the course of the following computation),

\begin{align*}
& J_{3}\\
& \leq\sup_{0\leq t \leq T}\sup_{ x > 1} \quad x^{\alpha}(1+x)^{\gamma}\left[  2\int
_{0}^{t} ds \exp\left( -2(t-s)\sqrt{x}m(f)\right)  \frac{1}{\sqrt{x}} \left(
\int_{0}^{\mu x} dy \sqrt{y}f(y)\right) \left( \int_{\mu x}^{\infty
}(f(z)(1+z)^{\gamma})(1+z)^{-\gamma}\right) \right. \\
& +\left.  \int_{0}^{t} ds \exp\left( -2(t-s)\sqrt{x}m(f)\right)  \int
_{(1-\mu)x }^{\infty}\int_{(1-\mu)x}^{\infty}dy dz f(y)f(z)
\mathbbm{1}(y+z\geq x)\right] \\
& \leq2\phi(T)\sup_{0\leq t \leq T}\sup_{ x > 1} \quad x^{\alpha}(1+x)^{\gamma}(1+\mu
x)^{-\gamma+1}\frac{1}{\gamma- 1} \frac{m(f)}{\sqrt{x}}\int_{0}^{t}
ds\exp\left( -2(t-s)\sqrt{x}m(f)\right) \\
& + 2\sup_{0\leq t \leq T}\sup_{ x > 1} \quad x^{\alpha}(1+x)^{\gamma}\int_{0}^{t} ds
\exp\left( -2(t-s)\sqrt{x}m(f)\right)  \int_{(1-\mu)x }^{\infty}\int
_{x/2}^{\infty}dy dz f(y)f(z) \mathbbm{1}(y+z\geq x)\mathbbm{1}(y>z)\\
& \leq\phi(T)\frac{1}{\gamma- 1}\sup_{x>1}\left( \frac{1+x}{x}\right) \left(
\frac{1+x}{1+\mu x}\right) ^{\gamma-1}\\
& +2\phi(T)\sup_{0\leq t \leq T}\sup_{ x > 1} \quad(1+x)^{\gamma} \int_{0}^{t}
ds \exp\left( -2(t-s)\sqrt{x}m(f)\right)  \int_{(1-\mu)x }^{\infty}dy\left(
\frac{y}{(1-\mu)x}\right) ^{3/2}f(y)\int_{x/2}^{\infty} dz (1+z)^{-\gamma}\\
& \leq2\phi(T)\frac{1}{\gamma- 1}\left( \frac{1}{\mu}\right) ^{\gamma-
1}+2\phi(T)\sup_{0\leq t \leq T}\sup_{ x > 1}\quad te(f)\left( \frac{1}%
{(1-\mu)x}\right) ^{3/2}(1+x)^{\gamma}\frac{(1+\frac{x}{2})^{-\gamma+1}%
}{\gamma-1}\\
&\leq \frac{2.5e}{\gamma- 1}\phi(T)+2^{\alpha+\gamma+1}\frac{\gamma^{\frac{3}{2}}}{\gamma-1}\left( Te(f)\right) \phi(T),
\end{align*}

where in the above computation $m(f)$ and $e(f)$ are the mass and energy
associated with the function $f$ (and hence constant in time) and in the upper
bound in the last step we used $\mu=1 - \frac{1}{\gamma}$.

Putting all of the above estimates together we have:
\begin{align*}
\phi(T) & \leq\phi(0)+T(c(T))^{2}\phi(T)+2^{\alpha+\gamma+1}T(c(T))^{2}\phi
(T)+2^{\gamma}\frac{2-\alpha}{1-\alpha}(Tc(T))\phi(T)\\
& +2^{\alpha+\gamma+1}\frac{\gamma^{\frac{3}{2}}}{\gamma- 1}\left( Te(f)\right)
\phi(T)+ \frac{2.5e}{\gamma- 1}\phi(T),
\end{align*}
i.e.
\begin{align*}
\left[ 1-\frac{2.5e}{\gamma- 1}-\left(  2^{\alpha+\gamma+1}\frac{\gamma^{\frac{3}{2}}%
}{\gamma-1}e(f)+2^{\gamma}\frac{2-\alpha}{1-\alpha}c(T)+\left( 1+2^{\alpha+\gamma+1}\right)
(c(T))^{2}\right) T\right] \phi(T)\leq\phi(0)
\end{align*}
As long as $c(T)<\infty$, for $\gamma>2.5e+1$, we can choose the time $T$ small
enough so that:
\begin{align*}
1-\frac{2.5e}{\gamma- 1}-\left(  2^{\alpha+\gamma+1}\frac{\gamma^{\frac{3}{2}}%
}{\gamma-1}e(f)+2^{\gamma}\frac{2-\alpha}{1-\alpha}c(T)+\left( 1+2^{\alpha+\gamma+1}\right)
(c(T))^{2}\right) T=\kappa>0,
\end{align*}

which means,
\begin{align*}
\phi(T)\leq\frac{1}{\kappa}\phi(0).
\end{align*}
If we have, for any suitably small $T>0$, $c(T)<\infty$, we can iterate the
above estimate and obtain, for any $0\leq n <\infty$:
\begin{align*}
\phi(nT)\leq\frac{1}{\kappa^{n}}\phi(0).
\end{align*}
Thus solutions which are initially bounded in the weighted sup-norm defined
above remain bounded at all subsequent finite times if the $L^{1}$ norm is
bounded over some interval of time.
\end{proof}

\bigskip

We now prove an easy corollary to the above theorem, which shows that the blow-up of the $L^{1}$-norm takes place locally near the origin and gives us a lower estimate
for the time-dependence of this blow-up.

\begin{corollary}
Let $f(.,t)$ be a conservative mild solution of the BN equation with mass $m(f)$ and energy $e(f)$. Given any $\delta>0$ let $l_{\delta}(t)=\int_0^{\delta} f(x,t)dx$ be the localized $L^1$-norm of this solution. Then the $L^1$-norm of $f(.,t)$ can blow up only if this localized norm blows up.

Suppose the blow-up time for this localized norm is $T^*>0$. Then  for $t\in[0,T^*)$ the following estimate holds:
$$l^{\delta}(t)\geq \frac{1}{\sqrt{2(T^{*}-t)}} - C(m(f),\delta),$$
where $C(m(f),\delta)$ is a constant depending only on the mass $m(f)$ and $\delta$.
\end{corollary}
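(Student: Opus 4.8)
The statement has a qualitative half — $\|f(\cdot,t)\|_{L^1}$ becomes unbounded precisely when $l_\delta(t)$ does — and a quantitative half, the blow-up rate for $l_\delta$. I would treat the qualitative half first. Since $l_\delta(t)\le\|f(\cdot,t)\|_{L^1}$ is immediate, it suffices to bound the tail by a time-independent constant: for $x\ge\delta$ one has $f(x)\le\delta^{-1/2}\sqrt{x}\,f(x)$, so that, using that the solution is conservative (the mass $\int_0^\infty\sqrt{x}f(x,t)\,dx=m(f)$ is constant in $t$), $\int_\delta^\infty f(x,t)\,dx\le\delta^{-1/2}m(f)=:C_0$. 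Hence $\|f(\cdot,t)\|_{L^1}\le l_\delta(t)+C_0$, which proves the assertion and also records the constant $C_0=C_0(m(f),\delta)$ needed below.

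For the rate, the plan is to establish a differential inequality $\dot l_\delta(t)\le\bigl(l_\delta(t)+C\bigr)^3$ with $C=C(m(f),\delta)$ and then integrate it. From the mild (equivalently weak) formulation, $l_\delta$ is absolutely continuous and, since $\partial_t f=J[f]-a[f]f$ with $a[f]\ge 0$ and $f\ge 0$, we have $\dot l_\delta(t)\le\int_0^\delta J[f](x,t)\,dx$ a.e. Split $J[f]$ into the three gain terms $J_1,J_2,J_3$ coming from $f(x)$, $f(w)$ and $1$ in the factor $f(x)+f(w)+1$, and use $W(x,w,y,z)\le 1$ throughout. Then $\int_0^\delta J_1\,dx\le\bigl(\int_0^\delta f\bigr)\bigl(\int_0^\infty f\bigr)^2=l_\delta\,\|f\|_{L^1}^2$ and $\int_0^\delta J_3\,dx\le\delta\,\|f\|_{L^1}^2$; for $J_2$, the substitution $s=y+z-x$ in the innermost integral turns $\int_0^\delta f(y+z-x)\,\mathbbm{1}(y+z-x\ge0)\,dx$ into $\int_{(y+z-\delta)_+}^{y+z}f(s)\,ds$, which one controls by separating $\{s\le\delta\}$ from $\{s>\delta\}$, the latter via the tail bound $C_0$. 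Substituting $\|f\|_{L^1}\le l_\delta+C_0$ throughout, $\int_0^\delta J[f]\,dx$ is dominated by a cubic polynomial in $l_\delta$ whose lower-order coefficients depend only on $m(f)$ and $\delta$, and after enlarging $C$ this is $\le(l_\delta+C)^3$.

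It remains to integrate. Put $u(t):=l_\delta(t)+C$, so $u$ is absolutely continuous, $u\ge C>0$, $\dot u\le u^3$ a.e., and $\limsup_{t\uparrow T^*}u(t)=\infty$. Then $\frac{d}{dt}u^{-2}=-2u^{-3}\dot u\ge-2$ a.e., so for $t<t'<T^*$ we get $u(t)^{-2}\le u(t')^{-2}+2(t'-t)$; sending $t'\uparrow T^*$ along a sequence on which $u\to\infty$ yields $u(t)^{-2}\le 2(T^*-t)$, i.e.\ $l_\delta(t)=u(t)-C\ge\frac{1}{\sqrt{2(T^*-t)}}-C$, as claimed. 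The genuinely delicate step is the cubic estimate of $\int_0^\delta J[f]\,dx$: one must verify finiteness of all the integrals using only the conservation laws rather than any $L^\infty$ control on $f$, and — this is what gives the sharp constant $\sqrt2$ rather than a worse one — one must keep the coefficient of $l_\delta^3$ equal to $1$. The crude bound $W\le 1$ applied to $J_1$ and $J_2$ separately only gives $2\,l_\delta^3$, so to reach the stated rate one has to be sharper: either retain part of the loss term $a[f]f$ and show it absorbs the $f(w)$-gain $J_2$, or estimate $J_2$ directly as genuinely lower order in $l_\delta$ (using $W\le\sqrt{w}/\sqrt{x}$, the localisation $w\le\delta$, and the tail bound). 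This is where essentially all the real work lies; the rest is bookkeeping.
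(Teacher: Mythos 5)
Your route coincides with the paper's. The paper (i) bounds the tail $\int_\delta^\infty f(x,t)\,dx\le\delta^{-1/2}m(f)$ by mass conservation, so $\|f(\cdot,t)\|_{L^1}\le l_\delta(t)+C_0$; (ii) differentiates $l_\delta$, keeps only the gain $\int_0^\delta J[f]\,dx$, uses $W\le 1$, and asserts $\dot l_\delta\le(l_\delta+C(m(f),\delta))^3$ with $C=\max(\delta^{-1/2}m(f),1+\delta)$; and (iii) compares with the explicit solution of $\dot h=(h+C)^3$ that blows up at $T^*$. You perform (i)--(iii) in the same order, and your direct monotonicity computation for $u^{-2}$ in step (iii) is cleaner and more careful than the paper's informal ``super-solution'' comparison (which incidentally should say sub-solution, as $\dot l_\delta\le(\cdot)$).

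The concern you flag about the leading coefficient in front of $l_\delta^3$ is genuine and applies equally to the paper. The three gain pieces are bounded by $l_\delta\|f\|_{L^1}^2$, $\|f\|_{L^1}^3$ and $\delta\|f\|_{L^1}^2$ respectively; inserting $\|f\|_{L^1}\le l_\delta+C_0$ gives at most $(l_\delta+C_0)^2\bigl(2l_\delta+C_0+\delta\bigr)$, whose leading term is $2l_\delta^3$, not $l_\delta^3$. The paper writes $\le(l_\delta+C)^3$ with no intermediate steps; if the coefficient is really $2$ the delivered rate is $\frac{1}{2\sqrt{T^*-t}}-C$, so the $\sqrt2$ in the statement is not visibly earned by the displayed argument. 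Your two proposed remedies are the sensible ones to try, but note that the $w\le\delta$ localisation you invoke for the $f(w)$-gain is not available ($w=y+z-x$ is unbounded above), and neither your sketch nor the paper actually shows that the coefficient $1$ is attainable. Modulo this shared gap, your argument is sound.
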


\begin{proof}
To see that the blow-up of the $L^{1}$
norm has to occur locally in space in some neighborhood of the origin, we just need to notice the following, :
\begin{align*}
c(T) & =\sup_{0\leq t\leq T}\int_{0}^{\infty} dy f(y,t)\\
& =\sup_{0\leq t \leq T}\left[ \int_{0}^{\delta} dy f(y,t) + \int_{\delta
}^{\infty} dy f(y,t)\right] \\
& \leq\sup_{0\leq t \leq T}\int_{0}^{\delta} dy f(y,t) + \sup_{0\leq t \leq T}
\frac{1}{\sqrt{\delta}}\int_{\delta}^{\infty} dy \sqrt{y}f(y,t)\\
& \leq c_{\delta}(T)+\frac{1}{\sqrt{\delta}}m(f),
\end{align*}

where $\delta>0$ is any fixed number and
\[
c_{\delta}(T)=\sup_{0\leq t \leq T}\int_{0}^{\delta} dy f(y,t).
\]
It is then clear that the $L^{1}$ norm can only blow up in an interval
$[0,\delta]$, for any chosen $\delta>0$, since the other part of the upper
bound is constant in time.


 Now the evolution equation for the density distribution $f(x,t)$ implies:
\begin{align*}
\frac{d}{dt}l_{\delta}(t) & \leq \int_0^{\delta}dx\int_0^{\infty}\int_0^{\infty}dydz W(w,x,y,z)f(y)f(z)\left(f(x)+f(w)+1\right)\mathbbm{1}(w=y+z-x\geq 0)\\
& \leq \int_0^{\delta}dx\int_0^{\infty}\int_0^{\infty}dydzf(y)f(z)\left(f(x)+f(w)+1\right)\\
& \leq \left[ l_{\delta}(t)+ C(m(f),\delta) \right]^3,
\end{align*}
where $$C(m(f),\delta)=\max\left(\frac{1}{\sqrt{\delta}}m(f), 1+\delta\right).$$

The blow-up time for the $L^1$ norm is $T^*$.  Then at  any time $t$ in some time-interval to the left of this critical time, say $t\in[T_0,T^*]$,
$l_{\delta}(t)$ must be a super solution of the differential equation: $$\frac{d}{dt} h(t)= \left[ h(t)+ C(m(f),\delta) \right]^3,$$ with the condition that $\lim_{t\rightarrow T^{*-}} h(t)=\infty.$
This means that the localized $L^1$ norm satisfies the following inequality:

$$l^{\delta}(t)\geq \frac{1}{\sqrt{2(T^*-t)}} - C(m(f),\delta).$$
\end{proof}

We now prove that if any solution, which is bounded initially by a constant at the origin and by the power law $1/x$ everywhere else, stays bounded for some time.  This  yields a condition about the blow-up (in unweighted sup norm) of solutions that are initially bounded in a given weighted sup-norm. This result is obtained as an implication of Theorem $5.1$ in \cite{Lu4} .

\begin{theorem}
Suppose $f(x,0)\in L^{\infty}(\mathbb{R_+}, (1+x)^{\gamma})$ has mass $m(f)<\infty$ , energy $e(f)<\infty$ and satisfies
$$f(x,0)\leq\min\left(1, \frac{1}{x}\right).$$ 
Let $T_*$ be the blow-up time for mild solutions of (\ref{BN}) in $L^{\infty}((\mathbb{R_+}, (1+x)^{\gamma}))$ having initial datum $f(x,0)$, mass $m(f)$ and energy $e(f)$ . Then:

$$\lim_{t\rightarrow T_*^{-}} \left(\sup_{x\in\mathbb{R}_+}xf(x,t)\right)=\infty.$$

\end{theorem}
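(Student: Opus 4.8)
The plan is to argue by contradiction using the dichotomy supplied by the first theorem of this section together with the condensation result of \cite{Lu4}. Suppose, contrary to the claim, that $\limsup_{t\to T_*^-}\left(\sup_{x\in\mathbb{R}_+} x f(x,t)\right)<\infty$; then, after possibly shrinking the interval, there is a constant $A<\infty$ and a time $t_0<T_*$ such that $x f(x,t)\le A$ for all $x\ge 0$ and all $t\in[t_0,T_*)$. By the hypothesis $f(x,0)\le\min(1,1/x)$ and a continuity argument at the origin, we may also assume $f(x,t)\le 1$ for $x$ bounded (or absorb this into $A$ by working with $\min(1,A/x)$ as the pointwise envelope). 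The first step is therefore to record that, under this standing assumption, $f(\cdot,t)$ is bounded above by the critical power law $1/x$ up to a harmless constant on the whole interval $[t_0,T_*)$.

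The second step is to convert this pointwise bound into a bound on the localized $L^1$-norm. Because $xf(x,t)\le A$, for any fixed $\delta>0$ we have $\int_0^\delta f(x,t)\,dx\le \int_0^\delta \min(1, A/x)\,dx \le A + A\log(1/\delta)$ (or simply $\le A(1+|\log\delta|)$), which is finite and independent of $t\in[t_0,T_*)$. Combined with the bound $c(T)\le c_\delta(T)+\frac{1}{\sqrt\delta}m(f)$ already derived in the proof of the Corollary, this shows that $c(t):=\sup_{t_0\le s\le t}\int_0^\infty f(x,s)\,dx$ stays bounded as $t\uparrow T_*$. Now I would invoke the first theorem of this section: since the solution is conservative, is bounded in the weighted sup-norm $\|\cdot\|_\infty^{\alpha,\gamma}$ at time $t_0$ (here $\alpha$ may be taken $0$, or any value in $[0,1)$, and $\gamma$ as in that theorem, which is licit since $f(\cdot,0)\in L^\infty(\mathbb{R}_+,(1+x)^\gamma)$ and this weighted bound is propagated on $[0,t_0]$ by the same theorem), and since $c(T_*)<\infty$, the theorem gives $\phi(t)\le \frac{1}{\kappa}\phi(t_0)$ uniformly for $t<T_*$, i.e.\ $\sup_{x}x^\alpha(1+x)^\gamma f(x,t)$ remains bounded. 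In particular $\|f(\cdot,t)\|_\infty$ stays bounded as $t\uparrow T_*$, contradicting the definition of $T_*$ as the blow-up time in $L^\infty(\mathbb{R}_+,(1+x)^\gamma)$.

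The role of Theorem $5.1$ of \cite{Lu4} enters to make the second step airtight and to handle the possibility that what fails at $T_*$ is not a genuine sup-norm blow-up but the onset of a condensate: that theorem asserts (in the form needed here) that a solution which is initially bounded above by $1/x$, or whose $L^1$-mass near the origin stays controlled, cannot develop a condensate, equivalently that no mass escapes to the Dirac at $x=0$; hence the mild solution remains a genuine function solution with the regularity assumed, and the $L^1$ and $L^\infty$ bounds above are the only obstructions to continuation past $T_*$. Thus if $xf(x,t)$ stayed bounded, the solution would extend past $T_*$ in $L^\infty(\mathbb{R}_+,(1+x)^\gamma)$, a contradiction. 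I expect the main obstacle to be the bookkeeping at the origin: one must be careful that the standing assumption $xf\le A$ together with the initial bound $f(x,0)\le 1$ really does yield a time-uniform bound on $\int_0^\delta f\,dx$ (the logarithmic divergence in $\delta$ is fine, but one should check that no constant blows up as $t\to T_*$), and that the hypotheses of the first theorem — conservativeness and finiteness of the weighted sup-norm at the starting time — are genuinely available; the cleanest route is to quote \cite{Lu4} precisely to rule out condensate formation so that "mild solution" and "weighted-sup-norm solution" coincide throughout $[0,T_*)$.
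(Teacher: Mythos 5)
There is a genuine gap in Step 2 of your argument, and it sits exactly where the real work of the proof lives. From the standing assumption $x f(x,t)\le A$ on $[t_0,T_*)$ you only get the pointwise bound $f(x,t)\le A/x$, and $\int_0^\delta A/x\,dx$ is divergent. To conclude $\int_0^\delta f(x,t)\,dx<\infty$ uniformly in $t$ you need $f(\cdot,t)$ to be bounded by some finite constant near $x=0$ for every $t\in[t_0,T_*)$, and this is \emph{not} supplied by ``a continuity argument at the origin'' or by the initial bound $f(x,0)\le\min(1,1/x)$: the hypothesis only controls $t=0$, and the quantity $\|f(\cdot,t)\|_\infty$ is precisely what threatens to blow up as $t\uparrow T_*$. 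In other words, your inequality $\int_0^\delta f\,dx\le\int_0^\delta\min(1,A/x)\,dx$ assumes a bound $f(x,t)\le\min(1,A/x)$ (with a time-uniform constant) that you have not established; the factor $\min(1,\cdot)$ is doing all the work and is unjustified. This is not bookkeeping — it is the theorem.

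The paper closes exactly this gap with a comparison-function argument: it introduces $\Phi(x,t)=C\min(\lambda(t),1/x)$ with $\lambda(t)=\exp\!\bigl(Ct\,(2e(f)+11C^2+2C)\bigr)$ and, using Lemma 5.3 of \cite{Lu4}, shows via a Gronwall estimate on $m\bigl((f(t)-\Phi(t))_+\bigr)$ that $f(x,t)\le\Phi(x,t)$ whenever $x f(x,t)\le C$ on $[0,T]$. That gives the needed time-dependent (but finite for each $t$) uniform bound near the origin, and only then does the $L^1$ control, and hence Theorem 1, kick in as you describe. Note also that your paraphrase of Theorem 5.1 of \cite{Lu4} is off: the paper invokes it as a local existence result producing a conservative mild solution with $f(x,t)\le C(\tau)/x$ on a short interval $[0,\tau]$, not as a statement that an $L^1$ bound near the origin precludes condensation. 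Your high-level strategy (bound $xf\Rightarrow$ bound $L^1$ near the origin $\Rightarrow$ Theorem 1 $\Rightarrow$ bounded weighted sup-norm $\Rightarrow$ contradiction with $T_*$) matches the paper's steps 2 and 3, but without the $\Phi$-comparison you have skipped the hard implication, and the argument as written does not stand.
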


\begin{proof}

Let us note that, by Thm. $5.1$ of \cite{Lu4}, under the initial conditions stated above, there exists some time $\tau>0$ such that there is a conservative mild solution $f(x,t)$, with initial datum $f(x,0)$, mass $m(f)$ and energy $e(f)$, such that  $f(x,t)\leq\frac{C(\tau)}{x}$
for all $(x,t)\in R_{+}\times[0,\tau]$, where $C(\tau)>1$ is a finite constant depending on $\tau$.


Our proof consists in the following argument:

1. We first show that, if $T>0$ is such that there exists a finite constant $C$ with $xf(x,t)<C$, for all $(x,t)\in \mathbb{R}_+\times[0,T]$, then :

\begin{eqnarray}\label{LuBound}
f(x,t)\leq C\min\left(k(T),\frac{1}{x}\right), \qquad\mbox{for all}\quad (x,t)\in\mathbb{R}_+\times[0,T],
\end{eqnarray} where $k(T)$ is a constant depending on $T$.

2. The above bound implies that $$T_* \geq \tilde{T}, \qquad \mbox{where}\qquad \tilde{T}=\sup\{\tau>0: xf(x,t)<\infty\quad  \forall  (x,t)\in \mathbb{R}_+\times[0,\tau] \}.$$

3. However, $T_*\ngtr \tilde{T},$ because, otherwise, we would be able to find a $T'$,  $\tilde{T}\leq T'<T_*$,  such that $\sup_{x\in\mathbb{R}_+} xf(x,T')=\infty$ and  $f(x,T')<\infty$.
Since $f(x,T')<\infty$, the $L^1$ norm of $f$ stays bounded over the time-interval $[0, T']$, and, by Theorem 1 of this paper, this implies the boundedness of the quantity $(1+x)^{\gamma}f(x,t) $
for all $(x,t)\in\mathbb{R}_+\times[0,T'],$ which means $\sup_{x\in\mathbb{R}_+} xf(x,T')<\infty$.

Thus $T_*=\sup\{\tau>0: xf(x,t)<\infty, (x,t)\in \mathbb{R}_+\times[0,\tau] \}$, and $\lim_{t\rightarrow T_*^{-}} \left(\sup_{x\in\mathbb{R}_+}xf(x,t)\right)=\infty.$
   
\vspace{0.2in}

Now we just need to prove (\ref{LuBound}) and the rest of the argument would follow as above.
To this end let us introduce the following comparison function: $$\Phi(x,t)= C\min\left(\lambda(t),\frac{1}{x}\right),$$ where $ \lambda(t)=\exp  \left(Ct\left(2e(f)+11C^2+2C\right)\right).$

Note that, since by our assumption, the  inequality $f(x,t)\leq\frac{C}{x}$ holds for $(x,t)\in\mathbb{R}_+\times[0,T]$ , $f(x,t)>\Phi(x,t)$ implies that $x<\lambda(t)^{-1}\leq 1$, where $t\in[0,T].$
Now, for $s\in[0,T]$ we have:

\begin{align}
\label{diffevol}
\partial_{s}\left(f(x,s)-\Phi(x,s)\right)_{+}=\left(Q(f)(x,s)-\partial_s\Phi(x,s)\right)\boldsymbol{1}(f(x,s)>\Phi(x,s)),
\end{align}
where
\begin{eqnarray*}
&&Q(f)(x,s)\\
&=&\int_{\mathbb{R}_+^2} dy dz W(w,x,y,z)\left[f(y,s)f(z,s)(f(x,s)+f(w,s)+1) - f(x,s)f(w,s)(f(y,s)+f(z,s)+1)\right]\boldsymbol{1}(w=y+z-x\geq 0)\\
&=&\int_{\mathbb{R}_+^2} dy dz W(w,x,y,z) f(y,s)f(z,s)(1+ f(w,s))\boldsymbol{1}(w=y+z-x\geq 0)-\\
&{}& - f(x,s)\int_{\mathbb{R}_+^2} dy dz W(w,x,y,z)\left(f(w,s)(f(y,s)+f(z,s)+1) - f(y,s)f(z,s)\right)\boldsymbol{1}(w=y+z-x\geq 0)\\
&=& Q_+(f)(x,s) - f(x,s)Q_-(f)(x,s).
\end{eqnarray*}
Let us note here that $Q_-(f)(x,s)\geq 0$ (cf. Lemma $2.3$,\cite{Lu4}).

Integrating (\ref{diffevol}), we have, for all $t\in[0,T]$ and a.e. in $\mathbb{R}_+$:

$$\left(f(x,t)-\Phi(x,t)\right)_{+}=\int_0^t ds\left(Q(f)(x,s)-\partial_s\Phi(x,s)\right)\boldsymbol{1}(f(x,s)>\Phi(x,s)),$$
and, consequently:
\begin{eqnarray}\label{massevol}
m\left(f(t)-\Phi(t)\right)_+ &=& \int_{\mathbb{R}_+} dx\sqrt{x}\int_0^t ds\left(Q(f)(x,s)-\partial_s\Phi(x,s)\right)\boldsymbol{1}(f(x,s)>\Phi(x,s))\\ \nonumber
&\leq&  \int_{\mathbb{R}_+} dx\sqrt{x}\int_0^t ds\left(Q_+(f)(x,s)-\partial_s\Phi(x,s)\right)\boldsymbol{1}(f(x,s)>\Phi(x,s)).
\end{eqnarray}

We now estimate first the cubic term and then the quadratic term in $Q_+(f)(x,s)$. Using Lemma $5.3$ of \cite{Lu4} we have the following upper bound:
\begin{eqnarray*}
&&\int_0^t ds\int_{\mathbb{R}_+} dx\sqrt{x}\boldsymbol{1}(f(x,s)>\Phi(x,s))\int_{\mathbb{R}_+^2} dydz W(w,x,y,z) f(w,s)f(y,s)f(z,s) \boldsymbol{1}(w=y+z-x\geq 0)\\
&\leq& \int_0^t ds\int_{\mathbb{R}_+} dx\sqrt{x}\boldsymbol{1}(f(x,s)>\Phi(x,s))\int_{\mathbb{R}_+^2} dydz W(w,x,y,z) \Phi(w,s)\Phi(y,s)\Phi(z,s) \boldsymbol{1}(w=y+z-x\geq 0)+\\
&{}&+\int_0^t ds\int_{\mathbb{R}_+} dx\sqrt{x}\boldsymbol{1}(f(x,s)>\Phi(x,s))\mathcal{J}(f,\Phi)(x,s),
\end{eqnarray*}
where
\begin{eqnarray*}
&& \int_0^t ds\int_{\mathbb{R}_+} dx\sqrt{x}\boldsymbol{1}(f(x,s)>\Phi(x,s))\mathcal{J}(f,\Phi)(x,s)\\
&=& \int_0^t ds\int_{\mathbb{R}_+} dx \boldsymbol{1}(f(x,s)>\Phi(x,s))\int_{\mathbb{R}_+^2} dydz \boldsymbol{1}(w=y+z-x\geq 0)\left(\min(w,x,y,z)\right)^{\frac{1}{2}}\left[f(w,s)f(y,s)\left(f(z,s)-\Phi(z,s)\right)_{+}+\right.\\
&{}&\left.+f(z,s)f(w,s)\left(f(y,s)-\Phi(y,s)\right)_{+} + f(z,s)f(y,s)\left(f(w,s)-\Phi(w,s)\right)_{+}\right]
\end{eqnarray*}

We first estimate the integral involving $\mathcal{J}(f,\Phi)(x,s)$, which consists of three terms as shown above. Let us look at the following integral:

$$J_1=\int_{\mathbb{R}_+} dx \boldsymbol{1}(f(x,s)>\Phi(x,s))\int_{\mathbb{R}_+^2} dydz \boldsymbol{1}(w=y+z-x\geq 0)\left(\min(w,x,y,z)\right)^{\frac{1}{2}}f(w,s)f(y,s)\left(f(z,s)-\Phi(z,s)\right)_{+}.$$
We estimate this term in the different regions of the $y-z$-plane as follows:

Suppose $\min(w, x, y, z)=z$. Let us call this region $\Delta_4.$ In this region $z\leq w=y+z-x\leq y.$
Thus:

\begin{eqnarray*}
&&J_1(\Delta_4)\\
&=&\int_{\mathbb{R}_+} dx \boldsymbol{1}(f(x,s)>\Phi(x,s))\iint\limits_{\Delta_4} dydz\sqrt{z}f(w,s)f(y,s)\left(f(z,s)-\Phi(z,s)\right)_{+}\\
&\leq&\int_{\mathbb{R}_+} dx \boldsymbol{1}(f(x,s)>\Phi(x,s))\iint\limits_{\Delta_4} dydz\sqrt{z}f(w,s)\left((f(y,s)-\Phi(y,s))_+ + \Phi(y,s)\right)\left(f(z,s)-\Phi(z,s)\right)\\
&\leq& \int_{\mathbb{R}_+} dx \boldsymbol{1}(f(x,s)>\Phi(x,s))\iint\limits_{\Delta_4} dydz \sqrt{z}f(w,s)\left((f(y,s)-\Phi(y,s)\right)_+ f(z,s)+\\
&{}& +  \int_{\mathbb{R}_+} dx \boldsymbol{1}(f(x,s)>\Phi(x,s))\iint\limits_{\Delta_4} dydz \sqrt{z}f(w,s)\Phi(y,s) \left(f(z,s)-\Phi(z,s)\right)_+\\
&\leq& \int_0^{\infty} dy \left(f(y,s) - \Phi(y,s)\right)_+\int_0^y dw f(w,s)\int_0^w dz f(z,s)\sqrt{z} + \\
&{}& + \int_0^{\lambda(s)^{-1}} dx\int\int_{\Delta_4} dydz \sqrt{z}f(w,s)\Phi(y,s) \left(f(z,s)-\Phi(z,s)\right)_+\left(\boldsymbol{1}(y<2\lambda(0)^{-1})+ \boldsymbol{1}(y\geq2\lambda(0)^{-1})\right)\\
&\leq& 4C^2\int_0^{\infty} dy \sqrt{y}\left(f(y,s) - \Phi(y,s)\right)_+ + C\lambda(s)\int_0^{\lambda(s)^{-1}} dx\int_0^{2\lambda(0)^{-1}} dw \left(f(w,s)-\Phi(w,s)\right)_+\int_0^w dz \sqrt{z}f(z,s)+\\
&{}& + C\lambda(s)\int_0^{\lambda(s)^{-1}} dx\int_0^{2\lambda(0)^{-1}} dw \Phi(w,s)\int_0^w dz \sqrt{z}\left(f(z,s)-\Phi(z,s)\right)_+ +\\
&{}& + \int_0^{\lambda(s)^{-1}} dx C\lambda(s)\int_{\lambda(0)^{-1}}^{\infty} dw\frac{\sqrt{w}}{\lambda(0)^{-\frac{1}{2}}}f(w,s)\int_0^w dz\sqrt{z}\left(f(z,s)-\Phi(z,s)\right)_+\\
&\leq& \left(4C^2+C(\lambda(0))^{\frac{1}{2}}m(f)\right) m\left(\left(f(s)-\phi(s)\right)_+\right),
\end{eqnarray*}
where  we have used the fact that $f(x,s)\leq\frac{C}{x}$. $J_1$ is estimated in the other regions of the $y-z$ plane in a similar manner, yielding the bound:
\begin{eqnarray*}
J_1&=&\int_{\mathbb{R}_+} dx \boldsymbol{1}(f(x,s)>\Phi(x,s))\int_{\mathbb{R}_+^2} dydz \boldsymbol{1}(w=y+z-x\geq 0)\left(\min(w,x,y,z)\right)^{\frac{1}{2}}f(w,s)f(y,s)\left(f(z,s)-\Phi(z,s)\right)_{+}\\
&\leq& A_1\left(C,\lambda(s), m(f)\right) m\left(\left(f(s)-\phi(s)\right)_+\right),
\end{eqnarray*}

where $A_1$ is a constant depending on $m(f)$, $\lambda(s)$ and $C$.

In a similar manner we obtain the estimate :
\begin{eqnarray*}
J_2 &=&\int_{\mathbb{R}_+} dx \boldsymbol{1}(f(x,s)>\Phi(x,s))\int_{\mathbb{R}_+^2} dydz \boldsymbol{1}(w=y+z-x\geq 0)\left(\min(w,x,y,z)\right)^{\frac{1}{2}}f(w,s)f(z,s)\left(f(y,s)-\Phi(y,s)\right)_{+}\\
&\leq& A_2\left(C,\lambda(s), m(f)\right) m\left(\left(f(s)-\phi(s)\right)_+\right).
\end{eqnarray*}

The remaining integral

$$J_3 =\int_{\mathbb{R}_+} dx \boldsymbol{1}(f(x,s)>\Phi(x,s))\int_{\mathbb{R}_+^2} dydz \boldsymbol{1}(w=y+z-x\geq 0)\left(\min(w,x,y,z)\right)^{\frac{1}{2}}f(y,s)f(z,s)\left(f(w,s)-\Phi(w,s)\right)_{+},$$
can also be estimated in a similar way, except for the region where $\min(w, x, y, z)=y$, which we call $\Delta_2.$ Thus we write the explicit estimate for this region below. In $\Delta_2$,  $y\leq w\leq z.$
Then we have:
\begin{eqnarray*}
&&J_3(\Delta_2)\\
&\leq& \int_0^{\lambda(s)^{-1}} dx\iint\limits_{\Delta_2} dydz \sqrt{y}f(y,s)\left(f(z,s)-\Phi(z,s)\right)\left(f(w,s)-\Phi(w,s)\right)_{+}\boldsymbol{1}(z\leq 1)+\\
&{}&+\int_0^{\lambda(s)^{-1}} dx\iint\limits_{\Delta_2} dydz \sqrt{y}f(y,s)\Phi(z,s)\left(f(w,s)-\Phi(w,s)\right)_{+}\boldsymbol{1}(z\leq 1)+\\
&{}& + \int_0^{\lambda(s)^{-1}} dx\iint\limits_{\Delta_2} dydz \sqrt{y}f(y,s)f(z,s)\left(f(w,s)-\Phi(w,s)\right)_{+}\boldsymbol{1}(z >1)\\
&\leq& \int_0^1 dz \left(f(z,s)-\Phi(z,s)\right)_+ \int_0^z dw f(w,s)\int_0^w dy \sqrt{y}f(y,s) + \\
&{}&+ \int_0^1 dz \Phi(z,s)\int_0^{\infty}dw\left(f(w,s)-\Phi(w,s)\right)_+\int_0^w dy \sqrt{y}f(y,s) +\\
&{}&+ \int_1^{\infty} f(z,s)\int_0^z dw \left(f(w,s)-\Phi(w,s)\right)_+\int_0^w dy\sqrt{y}f(y,s)\\
&\leq& A_3\left(C,\lambda(s), m(f)\right) m\left(\left(f(s)-\phi(s)\right)_+\right).
\end{eqnarray*}

Putting all of the above estimates together we arrive at the following bound:

\begin{eqnarray}\label{massterm}
\int_0^t ds\int_{\mathbb{R}_+} dx\sqrt{x}\boldsymbol{1}(f(x,s)>\Phi(x,s))\mathcal{J}(f,\Phi)(x,s)\leq \int_0^t ds A\left(C,\lambda(s), m(f)\right)m\left(\left(f(s)-\phi(s)\right)_+\right),
\end{eqnarray}

where $A\left(C,\lambda(s), m(f)\right)$ is a positive function of $s$.

Let us now look at the other part of the cubic term.
\begin{eqnarray*}
&&\iint\limits_{\mathbb{R}_+^2} dydz W(w,x,y,z)\Phi(w,s)\Phi(y,s)\Phi(z,s)\boldsymbol{1}(f(x,s)>\Phi(x,s))\boldsymbol{1}(w=y+z-x\geq0)\\
&\leq& \lambda(s)\iint\limits_{\mathbb{R}_+^2} dy'dz'\boldsymbol{1}(x'<1)W(w',x',y',z')\Phi'(w')\Phi'(y')\Phi'(z')\boldsymbol{1}(w'=y'+z'-x'\geq0),
\end{eqnarray*}
where we have rescaled the energy variables and the function $\Phi$ as $w'=w\lambda(s)$,  $x'=x\lambda(s)$, $y'=y\lambda(s)$, $z'=z\lambda(s)$ and $\Phi(y,s)=\lambda(s)\Phi'(y')$, so that,
$$\Phi'(y')=C\min\left(1,\frac{1}{y'}\right),$$ and $$W(x',y',y',z')=\frac{1}{\sqrt{x'}}\left(\min(w',x',y',z')\right)^{\frac{1}{2}}.$$
We now estimate this  as follows:

\begin{eqnarray*}
&&\lambda(s)\iint\limits_{\mathbb{R}_+^2} dy'dz'\boldsymbol{1}(x'<1)W(w',x',y',z')\Phi'(w')\Phi'(y')\Phi'(z')\boldsymbol{1}(w'=y'+z'-x'\geq0)\\
&\leq& C^3\lambda(s)\iint \boldsymbol{1}(\max(y',z')<1) dy'dz' \Phi'(w')\Phi'(y')\Phi'(z')\boldsymbol{1}(w'=y'+z'-x'\geq0) +\\
&{}& + C^3 \lambda(s)\iint dy'dz'\boldsymbol{1}(\min(y',z')\leq1) \boldsymbol{1}(\max(y',z')\geq1) \Phi'(w')\Phi'(y')\Phi'(z')\boldsymbol{1}(w'=y'+z'-x'\geq0)+\\
&{}& + C^3\lambda(s)\iint dy'dz'\boldsymbol{1}(\min(y',z')\leq1)\Phi'(w')\Phi'(y')\Phi'(z') \boldsymbol{1}(w'=y'+z'-x'\geq0)\\
&\leq& C^3\lambda(s)\left(\int_0^1 dy'\right)^2 + 2\lambda(s)C^3\left(\int_1^2 dz' \int_0^1 dy' + \int_2^{\infty}dz' \int_0^1dy' \frac{1}{z'(z'-1)}\right)+\\
&{}&+C^3\lambda(s)\int_1^{\infty}\int_1^{\infty}dy'dz'\frac{1}{y'z'}\frac{1}{\sqrt{y'z'}}\\
&\leq& 11C^3\lambda(s)
\end{eqnarray*}

Let us now proceed to estimate the quadratic term.

\begin{eqnarray*}
&&\int_0^t ds \int_{\mathbb{R}_+}dx \sqrt{x} \boldsymbol{1}(f(x,s)>\Phi(x,s))\iint_{\mathbb{R}_+^2} W(x,y,z) f(y,s) f(z,s)\boldsymbol{1}(w=y+z-x\geq 0) dy dz\\
&\leq& 2 \int_0^t ds \int_{\mathbb{R}_+}dx \sqrt{x} \boldsymbol{1}(f(x,s)>\Phi(x,s))\left[\int_1^{\infty}dz\int_0^z dy W(x,y,z)\left( (f(y,s)-\Phi(y,s))_++\Phi(y,s)\right)f(z,s)\boldsymbol{1}(w=y+z-x\geq 0)\right.\\
&{}&\left.+\int_0^1dz\int_0^z dy W(x,y,z)\left( (f(y,s)-\Phi(y,s))_++\Phi(y,s)\right)\left((f(z,s)-\Phi(z,s))_++\Phi(z,s)\right)\boldsymbol{1}(w=y+z-x\geq 0) \right]\\
&\leq& 2 \int_0^t ds \int_{\mathbb{R}_+}dx \sqrt{x} \boldsymbol{1}(f(x,s)>\Phi(x,s))\left[\int_1^{\infty}dz\int_0^z dy \sqrt{\frac{y}{x}}\left(f(y,s)-\Phi(y,s)\right)_+f(z,s) + \int_1^{\infty}dz\int_0^z dy\Phi(y,s)f(z,s) +\right.\\
&{}&\left.+ \int_0^1dz\left(f(z,s)-\Phi(z,s)\right)_+\int_0^z dy\sqrt{\frac{y}{x}}f(y,s)+ \int_0^1 dz \Phi(z,s)\int_0^z dy \sqrt{\frac{y}{x}}\left(f(y,s)-\Phi(y,s)\right)_+ +\right.\\
&{}&\left.+\int_0^1 dz \left(f(z,s)-\Phi(z,s)\right)_+\int_0^z dy\sqrt{\frac{y}{x}}\Phi(y,s) + \int_0^1 dz\Phi(z,s)\int_0^z dy\Phi(y,s)\right]\\
&\leq& 2 \int_0^t ds \left[\int_0^1 dx \quad m(f) m\left((f(s)-\Phi(s))_+\right)+\int_{\mathbb{R}_+}dx\sqrt{x}\boldsymbol{1}(f(x,s)>\Phi(x,s))C\lambda(s)\int_1^{\infty}dz \quad zf(z,s) +\right.\\
&{}&\left.+ C\int_0^1 dx\int_0^1 \sqrt{z}\left(f(z,s)-\Phi(z,s)\right)_+ +\int_0^{\lambda(s)^{-1}} dx \int_0^1 dz \Phi(z,s)\int_0^z dy\sqrt{y}\left(f(y,s)-\Phi(y,s)\right)_+ \right.\\
&{}&\left.+\int_0^1 dx \int_0^1 dz\sqrt{z}\left(f(z,s)-\Phi(z,s)\right)_+ + \int_{\mathbb{R}_+} dx \sqrt{x}\boldsymbol{1}(f(x,s)>\Phi(x,s))\int_0^1 dz \quad z C\lambda(s)\Phi(z,s) \right]\\
&\leq& \int_0^t ds \left(m(f)+2C+1\right)m\left((f(s)-\Phi(s))_+\right)+2\int_0^t ds \int_{\mathbb{R}_+}\sqrt{x} dx\boldsymbol{1}(f(x,s)>\Phi(x,s))(C+e(f))C\lambda(s)
\end{eqnarray*}

Putting the above estimates for the cubic and quadratic terms together we get the following inequality:
\begin{eqnarray}\label{Gronwall}
m\left((f(t)-\Phi(t))_+\right)\leq \int_0^t ds B\left(C,\lambda(s), m(f)\right)m\left((f(s)-\Phi(s))_+\right) + \int_0^t ds\int_{\mathbb{R}_+} dx \sqrt{x}\left( (2e(f)C+ 2C^2+11C^3)\lambda(s)-\dot{\lambda}(s)\right),
\end{eqnarray}
where $B\left(C,\lambda(s), m(f)\right)$ is a positive function of $s$.

From our choice of $\lambda(t)$ it is clear that $(2e(f)C+ 2C^2+11C^3)\lambda(s)-\dot{\lambda}(s)=0$. Thus as application of Gronwall's lemma to (\ref{Gronwall}) yields that:
$$m\left((f(t)-\Phi(t))_+\right)=0.$$
This means that, for all $t\in[0,T]$, a.e. in $\mathbb{R}_+$ the following inequality holds:

$$f(x,t)\leq \Phi(x,t).$$  This means that $g(x,t)=\min(f(x,t),\Phi(x,t))$ is a mild solution of (\ref{BN}), which has initial datum $f(x,0)$, mass $m(f)$ and energy $e(f)$ and which remains bounded for all
$(x,t)\in\mathbb{R}_+\times[0,T]$.  It is then obvious that $T_*>T$, where $T_*$ is the blow-up time for a mild solution of (\ref{BN}) with initial datum $f(x,0)$, mass $m(f)$ and energy $e(f)$.
This implies, by the arguments stated in the beginning of the proof, that :

$$\lim_{t\rightarrow T_*^{-}} \left(\sup_{x\in\mathbb{R}_+}xf(x,t)\right)=\infty.$$

\end{proof}
\vspace{0.15in}

\section{Local Existence result for a class of $L^1$-data}



In this section we present a local existence theorem for a class of measure-valued $L^1$-data, which is not related to the blow-up results stated above but of independent interest since this class of solutions of the BN equation has not been considered in previous papers.

We consider mild solutions of the BN equation, as defined by (\ref{mildsoln}).

Let us also define the following quantity:%

\begin{align*}
I[f](x,t):= \int_{0}^{t} ds \exp\left(  -\int_{s}^{t} d\xi a[f](x,\xi)\right)
J[f](x,s).
\end{align*}
We now define the following set of measures on $\mathbb{R}_+$:
$$G^{\beta}_{\kappa}=\{\mu|\mu\geq 0, \sup_{R>0}\int_{(R, R+1)}dxe^{\beta x}\mu(dx)\leq \kappa \}.$$

We also use the following notation for the norm in the above definition: $$||\mu||_{G^{\beta}}=\sup_{R>0}\int_{(R, R+1)}dxe^{\beta x}\mu(dx).$$
Then $G^{\beta}_K\subset \mathcal{M}_{+}(\mathbb{R}_{+})$, where $\mathcal{M}_{+}(\mathbb{R}_{+})$ is the cone of positive Radon measures on $\mathbb{R}_{+}$.
We will now first prove a few properties of this set of measures and then define measure-valued mild solutions for the Boltzmann-Nordheim equation.
\begin{proposition}
The set $G^{\beta}_{\kappa}$ is weakly compact.
\end{proposition}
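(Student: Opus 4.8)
The plan is to combine Prokhorov's theorem with a direct closedness argument, the exponential weight in the definition of $G^{\beta}_{\kappa}$ being precisely what supplies the uniform tightness that Prokhorov needs. Throughout I read ``weakly compact'' as sequential compactness for the weak convergence of finite positive Radon measures (testing against $C_{b}(\mathbb{R}_{+})$); since $\mathbb{R}_{+}$ is Polish this topology is metrizable, so sequential compactness and compactness coincide. (If one prefers the weak-$*$ viewpoint on the dual of $C_{0}(\mathbb{R}_{+})$, Banach--Alaoglu gives relative weak-$*$ compactness for free and the tail bound below shows no mass escapes to infinity, so the weak-$*$ limit is again a genuine weak limit; the closedness step is identical.)

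First I would extract from $\|\mu\|_{G^{\beta}}\le\kappa$ the two standard ingredients. Taking $R=n$ for $n\in\mathbb{N}_{0}$ gives $\mu((n,n+1])\le e^{-\beta n}\int_{(n,n+1]}e^{\beta x}\mu(dx)\le\kappa e^{-\beta n}$, so summing the geometric series yields the uniform mass bound $\mu(\mathbb{R}_{+})\le\kappa/(1-e^{-\beta})$ and the uniform tail bound $\mu((R,\infty))\le\kappa e^{-\beta\lfloor R\rfloor}/(1-e^{-\beta})$ for every $R>0$. Hence $G^{\beta}_{\kappa}$ is a uniformly bounded and uniformly tight family of positive measures: given $\varepsilon>0$ a single compact $[0,R_{\varepsilon}]$ captures all but $\varepsilon$ of the mass of every $\mu\in G^{\beta}_{\kappa}$. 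By Prokhorov's theorem $G^{\beta}_{\kappa}$ is relatively weakly sequentially compact, so any sequence $(\mu_{n})\subset G^{\beta}_{\kappa}$ has a subsequence $(\mu_{n_{j}})$ converging weakly to some $\mu\in\mathcal{M}_{+}(\mathbb{R}_{+})$.

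It then remains to show that the limit stays in $G^{\beta}_{\kappa}$, i.e. that $G^{\beta}_{\kappa}$ is weakly closed; nonnegativity of $\mu$ is automatic. Fix $R>0$ and pick an increasing sequence $\psi_{k}\in C_{c}((R,R+1))$ with $0\le\psi_{k}(x)\uparrow e^{\beta x}\boldsymbol{1}_{(R,R+1)}(x)$ pointwise (possible since the target is bounded on the open interval); extended by zero, each $\psi_{k}$ is a bounded continuous function on $\mathbb{R}_{+}$. Then for every $k$ one has $\int_{\mathbb{R}_{+}}\psi_{k}\,d\mu=\lim_{j}\int_{\mathbb{R}_{+}}\psi_{k}\,d\mu_{n_{j}}\le\kappa$, using $0\le\psi_{k}\le e^{\beta x}\boldsymbol{1}_{(R,R+1)}$ and $\mu_{n_{j}}\in G^{\beta}_{\kappa}$; letting $k\to\infty$, the monotone convergence theorem gives $\int_{(R,R+1)}e^{\beta x}\mu(dx)\le\kappa$. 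Since $R>0$ was arbitrary, $\|\mu\|_{G^{\beta}}\le\kappa$, so $\mu\in G^{\beta}_{\kappa}$, which together with the previous paragraph proves the proposition.

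The one delicate point is this last closedness step: the functional $\mu\mapsto\int_{(R,R+1)}e^{\beta x}\mu(dx)$ is defined through an integral against a discontinuous weight over an \emph{open} interval, so it is not weakly continuous and $G^{\beta}_{\kappa}$ is not obviously closed. What rescues the argument is that under weak convergence the ``open-interval, approximated-from-below'' version of this functional is lower semicontinuous in exactly the direction needed, which is what the monotone approximation by $\psi_{k}\in C_{c}((R,R+1))$ encodes; tightness, coming from the exponential weight, is what makes Prokhorov applicable in the first place.
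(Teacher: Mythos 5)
Your proof is correct and follows essentially the same route as the paper: establish uniform boundedness (and tightness) of the family via the exponential weight, extract a weakly convergent subsequence, and then show closedness of $G^{\beta}_{\kappa}$ by approximating $e^{\beta x}\boldsymbol{1}_{(R,R+1)}$ from below by continuous compactly supported functions and passing to the limit. The only cosmetic difference is that you explicitly invoke Prokhorov's theorem and make the tail/tightness estimate visible, whereas the paper works with a concrete sequence $\psi_m$ and appeals to uniform boundedness of Radon measures; the underlying mechanism is identical.
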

\begin{proof}
The measures in $G^{\beta}_{\kappa}$  are uniformly bounded. To notice this, let us define a sequence of sets $$E_K=(K,K+1)\cup(K+\frac{1}{2}, K+\frac{3}{2}), \qquad K\in\mathbb{N}.$$
Then for $\mu_j\in G^{\beta}_{\kappa}$ we have:
\begin{eqnarray*}
\mu_j(\mathbb{R}_{+})&<&\mu_j(\cup_{l=0}^{\infty}E_l)\\
&\leq&\sum_{l=0}^{\infty}\mu_j(E_l)\\
&\leq&\sum_{l=0}^{\infty}\int_{(l,l+1)}e^{-\beta x}e^{\beta x}\mu_j(dx)+\sum_{l=0}^{\infty}\int_{(l+\frac{1}{2},l+\frac{3}{2})}e^{-\beta x}e^{\beta x}\mu_j(dx)\\
&\leq& \int_{(0,1)}e^{\beta x}\mu_j(dx)+ \sum_{l=1}^{\infty}e^{-\beta l}\int_{(l,l+1)}  e^{\beta x}\mu_j(dx)  +  \sum_{l=0}^{\infty} e^{-\beta (l+\frac{1}{2})}\int_{(l+\frac{1}{2}, l+\frac{3}{2})}e^{\beta x}\mu_j(dx)\\
&\leq& C(\beta)\kappa,
\end{eqnarray*}
where $C(\beta)\leq 3$ is a constant that depends only on $\beta$.

Thus any sequence of measures in $G^{\beta}_{\kappa}$ is a uniformly bounded sequence of Radon measures and hence has a subsequence that weakly converges to a Radon measure $\mu$ in $\mathcal{M}_{+}(\mathbb{R}_+)$. Let us denote the subsequence by $\{\mu_i\}$. Then we have:
$$\lim_{i\to\infty}\int_{\mathbb{R}_+}\phi(x)\mu_i(dx)=\int_{\mathbb{R}_{+}}\phi(x)\mu(dx),$$ for all continuous, bounded functions $\phi.$
We now have to show that this limiting measure $\mu\in G^{\beta}_{\kappa}.$

Given any $R> 0$ the measure $\nu(dx)=e^{\beta x}\mathbbm{1}(R\leq x\leq R+1)\mu(dx)$ is inner regular. Now let us construct the following function on $\mathbb{R}_{+}$, for $m\geq 3$:
\begin{eqnarray*}
\psi_m(x)&=&\sin\left(\frac{m\pi}{2}(x-R)\right)e^{\beta x},\qquad R< x< R+\frac{1}{m}\\
&=& e^{\beta x},  \quad\qquad R+\frac{1}{m}\leq x\leq R+1-\frac{1}{m}\\
&=&\sin\left(\frac{m\pi}{2}(R+1-x)\right)e^{\beta x},\qquad R+1-\frac{1}{m}< x< R+1\\
&=& 0\qquad \text{otherwise.}
\end{eqnarray*}
Also, let $A_m=[R+\frac{1}{m}, R+1-\frac{1}{m}].$
Then we have:

\begin{eqnarray*}
\int_{(R,R+1)}e^{\beta x}\mu(dx)&=&\int_{(R,R+1)}\nu(dx)\\
&=& \sup_{A_m}\int_{A_m}\nu(dx)\\
&=& \limsup_{m\rightarrow\infty}\int \mathbbm{1}\left(R+\frac{1}{m}\leq x\leq R+1-\frac{1}{m}\right) e^{\beta x}\mu(dx)\\
&\leq&\limsup_{m\rightarrow\infty}\int \psi_{m}(x)\mu(dx).
\end{eqnarray*}

The compactly supported function $\psi_m(x)$ is continuous and bounded. Thus :$$\lim_{i\rightarrow\infty}\int\psi_m(x)\mu_i(dx)=\int\psi_m(x)\mu(dx).$$

For every $i$ we have:

\begin{eqnarray*}
\int\psi_m(x)\mu_i(dx)\leq\int_{(R,R+1)}e^{\beta x}\mu_i(dx)\leq \kappa\qquad \text{for all $m$.}
\end{eqnarray*}

This means, $$\int\psi_m(x)\mu(dx)\leq\kappa\qquad\text{for all $m$},$$
which in turn means:
$$\int_{(R,R+1)}e^{\beta x}\mu(dx)\leq\limsup_{m\rightarrow\infty}\int\psi_m(x)\mu(dx)\leq\kappa.$$
Thus, $\mu\in G^{\beta}_{\kappa}.$

\end{proof}

We now define the following uniform norm $$||\mu||_{G^{\beta},T}= \sup_{0\leq t\leq T}\sup_{R>0}\int_{(R, R+1)}dxe^{\beta x}\mu(t,dx).$$
Let us define the space:
$$G^{\beta}_{\kappa, T} = \{ f\in C \left([0, T]; G^{\beta}\right); ||f||_{G^{\beta},T} \leq \kappa \}.$$

   We can metrize $G^{\beta}_{\kappa}$ by introducing the following distance function:
$$d(\mu,\nu)=\sup_{f}\left\{|\int f d\mu - \int f d\nu|: f\in BL(\mathbb{R_{+}}), ||f||_{BL}\leq 1\right\},$$
where $$||f||_{BL}=||f||_{\infty}+\sup_{x, y, x\neq y}\frac{|f(x)-f(y)|}{|x-y|}.$$
The metric for $G^{\beta}_{\kappa, T}$ will then be:
$$\rho_{G^{\beta}_{\kappa, T}}(\mu,\nu)=\sup_{0\leq t\leq T} d(\mu,\nu).$$

In the subsequent arguments we would only need to use the fact that the space $G^{\beta}_{\kappa,T}$ is metrizable and not the metric described above.  The metrizability of $G^{\beta}_{\kappa, T}$ means that any weakly sequentially continuous operator $\mathcal{T}$ on this space mapping weakly compact subsets to weakly compact subsets has a fixed point in $G^{\beta}_{\kappa, T}$.
We also need to keep in mind that any equicontinuous family of functions $\mathcal{F}\in G^{\beta}_{\kappa, T}$ is a compact set in $G^{\beta}_{\kappa, T}$, by the Arzela-Ascoli theorem.

Let us now define measure-valued mild solutions of \eqref{BN}. To do this we first define the function $a(x,s)$ and the measure $J[f](s,.)$ that occur in the expression \eqref{mildsoln}:

For $f\in C\left([0, T], M_{+}(\mathbb{R}_{+})\right)$ we define:
\begin{eqnarray*}
a(x,s) &=& \int_{0}^{\infty}\int_{0}^{\infty}W(x,w,y,z)f(s,dw)f(s,dy)\mathbbm{1}(z=x+w-y\geq 0) \\
&+& \int_{0}^{\infty}\int_{0}^{\infty}W(x,w,y,z)f(s,dw)f(s,dz)\mathbbm{1}(y=x+w-z\geq 0) +  \int_{0}^{\infty}\int_{0}^{\infty} W(w,x,y,z)f(s,dw)dy\mathbbm{1}(z=x+w-y\geq 0).
\end{eqnarray*}
On the other hand, given $f\in C\left([0, T], M_{+}(\mathbb{R}_{+})\right)$, $J[f](s,.)$ is a measure  which is defined by its action on any test function $\phi\in C_{b}(\mathbb{R}_{+})$ as follows:
\begin{eqnarray*}
\int_{0}^{\infty}\phi(x)J[f](s,dx) &=& \int_{0}^{\infty}\int_{0}^{\infty}\int_{0}^{\infty}\phi(x)W(w,x,y,z)f(s,dx)f(s,dy)f(s,dz)\mathbbm{1}(w=y+z-x\geq0)\\
&+&\int_{0}^{\infty}\int_{0}^{\infty}\int_{0}^{\infty}\phi(x)W(w,x,y,z)f(s,dw)f(s,dy)f(s,dz)\mathbbm{1}(x=y+z-w\geq 0)\\
&+& \int_{0}^{\infty}\int_{0}^{\infty}\int_{0}^{\infty}\phi(x)W(w,x,y,z)f(s,dy)f(s,dz)dx\mathbbm{1}(w=y+z-x\geq 0).
\end{eqnarray*}

Then a measure-valued mild solution at time $t$ is:
\begin{eqnarray}\label{MMsoln}
f(t,dx)= \exp\left(-\int_{0}^t a(x,s)ds\right)f(0,dx) + \int_{0}^t ds \exp\left(-\int_{s}^t a(x,\xi)d\xi \right)J[f](s,dx),
\end{eqnarray}
and its action on a test function $\phi\in C_{b}(\mathbb{R}_{+})$ is given by:
\begin{eqnarray}\label{mesmildsoln}
\int_0^{\infty}\phi(x)f(t,dx)= \int_{0}^{\infty}\phi(x)\exp\left(-\int_{0}^t a(x,s)ds\right)f(0,dx) + \int_{0}^{\infty}\phi(x)\int_{0}^t ds \exp\left(-\int_{s}^t a(x,\xi)d\xi \right)J[f](s,dx).
\end{eqnarray}

We now define the time evolution operator $\mathcal{T}$ on $C\left([0,T], M_{+}(\mathbb{R}_{+})\right)$ as follows:
\begin{eqnarray}\label{evolnop}
\mathcal{T}[f](t,dx)= \exp\left(-\int_{0}^t a(x,s)ds\right)f(0,dx) + \int_{0}^t ds \exp\left(-\int_{s}^t a(x,\xi)d\xi \right)J[f](s,dx)
\end{eqnarray}

In order to prove the well-posedness of the problem we have to show that the operator $\mathcal{T}$ has a fixed point in $G^{\beta}_{\kappa,T})$. We will show this by proving that $\mathcal{T}$ is a weakly sequentially continuous operator mapping any compact subset of $G^{\beta}_{\kappa,T})$ to itself. This will be done by proving a few lemmas. Observe that  the mass and energy functionals associated with any measure $\mu\in G^{\beta}_{\kappa}$ are finite.  The next lemma shows that, given a sequence $\mu_j$ of measures weakly converging to $\mu\in G^{\beta}_{\kappa}$, the associated mass and energy functionals  converge to their expected limits.
\begin{lemma}
For any sequence of measures $\mu_j\in G^{\beta}_{\kappa}$ converging weakly to $\mu\in G^{\beta}_{\kappa}$ , we have:
$$\lim_{j\rightarrow\infty}\int x^{\alpha}d\mu_j=\int x^{\alpha} d\mu,$$ where $\alpha=\frac{1}{2},\frac{3}{2}.$
\end{lemma}

\begin{proof}
The limit stated in the theorem is obviously true on compact sets. Thus let us choose some large $R$ and notice the following, for any measure $\mu_i\in G^{\beta}_{\kappa}$:

\begin{eqnarray*}
\int_{(R^{\frac{1}{\alpha}},\infty)}x^{\alpha}\mu_i(dx)&\leq&\sum_{l=R^{\frac{1}{\alpha}}}^{\infty}\left[\int_{(l,l+1)}e^{\beta x}x^{\alpha}e^{-\beta x}\mu_i(dx)\right.\\
&{}&\left.+\int_{(l+\frac{1}{2}),l+\frac{3}{2})}e^{\beta x}x^{\alpha}e^{-\beta x}\mu_i(dx)\right]\\
&\leq& C(\alpha,\beta)\sum_{l=R^{\frac{1}{\alpha}}}^{\infty}\left[\int_{(l,l+1)}e^{\beta x}e^{-\beta x/2}\mu_i(dx)+\int_{(l+\frac{1}{2}, l+\frac{3}{2})}e^{\beta x}e^{-\beta x/2}\mu_i(dx)\right]\\
&\leq& C(\alpha,\beta) e^{-\beta\frac{R^{\frac{1}{\alpha}}}{2}}\kappa,
\end{eqnarray*}
where $C(\alpha,\beta)\leq 3$ is a positive constant depending only on $\alpha$ and $\beta$.
This means, $$\lim_{R\rightarrow\infty}\int_{(R^{\frac{1}{\alpha}},\infty)}x^{\alpha}\mu_i(dx)=0,\qquad\text{for all}\quad\mu_i\in G^{\beta}_{\kappa}.$$
Then convergence on compact sets means that $$\lim_{j\rightarrow\infty}\int x^{\alpha}d\mu_j=\int x^{\alpha} d\mu.$$

\end{proof}

We would now like to prove the following lemma:

\begin{lemma}
Let $G^{\beta}\subset\mathcal{M}_{+}(\mathbb{R}_{+})$ denote the space of positive Radon measures with the following norm:
\[
||f||_{G^{\beta}}=\sup_{R>0}\int_{R}^{R+1}e^{\beta x}f(dx),
\]
where $\beta>1$ depends only on the mass $m(f)=\int_{0}^{\infty}\sqrt
{x}f(dx)$ associated with the measure $f$. Then for $f\in G^{\beta}$ we have
the following estimate:
\begin{eqnarray}\label{bound}
||\mathcal{T}[f]||_{G^{\beta}}(t)\leq  ||f_0||_{G^{\beta}} + C\left(  t+\frac{1}{m(f)\sqrt{\beta}}\right)
||f||_{G^{\beta}}^{2}(1+||f||_{G^{\beta}}),
\end{eqnarray}
where $C$ is a numerical constant and $f_0=f(0,.)$.
This estimate implies in particular that there exists some time $T_{*}$ such that the operator $\mathcal{T}$ maps $G^{\beta}_{\kappa, T_{*}}$ to itself given $f(0,.)$, such that $||f_0||_{G^{\beta}}<\kappa$.

\end{lemma}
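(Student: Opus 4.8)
The plan is to estimate the $G^\beta$-norm of $\mathcal{T}[f](t,\cdot)$ directly from the defining formula \eqref{evolnop}, controlling separately the linear (semigroup) term and the source term involving $J[f]$. For the linear term, since $\exp(-\int_0^t a(x,s)\,ds)\le 1$ (because $a\ge 0$), we immediately get the contribution $\|f_0\|_{G^\beta}$, which is the first summand on the right of \eqref{bound}. The bulk of the work is the source term
\[
\int_0^t ds\,\exp\Bigl(-\int_s^t a(x,\xi)\,d\xi\Bigr) J[f](s,dx),
\]
whose $G^\beta$-norm must be bounded by $C\bigl(t+\tfrac{1}{m(f)\sqrt\beta}\bigr)\|f\|_{G^\beta}^2(1+\|f\|_{G^\beta})$. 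Here the three pieces of $J[f]$ (the two genuinely cubic gain terms plus the quadratic ``$+1$'' term coming from the Bose factor) are handled by the same scheme, with the cubic terms producing the $\|f\|_{G^\beta}^3$ contribution and the quadratic term the $\|f\|_{G^\beta}^2$ contribution.

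The key step is a localization-in-energy estimate: to bound $\sup_{R>0}\int_{(R,R+1)} e^{\beta x} J[f](s,dx)$, one uses that on the support of the collision kernel $w=y+z-x$, so that when $x\in(R,R+1)$ one has $y+z = w+x$ and the exponential weight factorizes as $e^{\beta x}\le e^{\beta(y+z)}e^{-\beta w}\cdot(\text{bounded factor})$, allowing the triple integral over $(y,z,w)$ to be split into unit-length slabs and summed geometrically — exactly as in the proof of Proposition (weak compactness of $G^\beta_\kappa$) and in Lemma (convergence of mass/energy functionals). The symmetric factor $W(w,x,y,z)=\min(\sqrt x,\sqrt w,\sqrt y,\sqrt z)/\sqrt x$ is bounded by $1$ for the terms where $x$ is not the minimum and is used to absorb small-$x$ singularities otherwise; the $1/\sqrt x$ prefactor is the source of the $\frac{1}{m(f)\sqrt\beta}$ term, which arises after performing the time integral $\int_0^t \exp(-\int_s^t a)\,ds$ and using a lower bound $a(x,\xi)\ge c\,\sqrt x\, m(f)$ for the loss rate (the standard lower bound on $a$, valid since the ``$+1$'' part of $a$ already dominates $\int \min(\sqrt x,\dots)\,f(dw)$ near $x$ small). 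Thus $\int_0^t \exp(-2(t-s)\sqrt x\,m(f))\,ds \le \min(t, \tfrac{1}{2\sqrt x\,m(f)})$, and the $\frac1{\sqrt x}$ weight near the origin is integrated against this to yield a finite constant times $\frac{1}{m(f)\sqrt\beta}$ after the $e^{\beta x}$-weighted slab summation.

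I expect the main obstacle to be the bookkeeping for the region where $x$ is small (so the kernel weight $1/\sqrt x$ is large) combined with the need to keep the final bound genuinely quadratic-plus-cubic in $\|f\|_{G^\beta}$ rather than picking up extra factors of $\|f\|$; this requires using the exponential damping $\exp(-\int_s^t a)$ in an essential way precisely in that region, rather than just bounding it by $1$. Once the $G^\beta$-bound \eqref{bound} is established, the final assertion is a routine consequence: choosing $\kappa$ and then $T_*$ small enough that $\|f_0\|_{G^\beta}<\kappa$ and $C(T_*+\tfrac{1}{m(f)\sqrt\beta})\kappa^2(1+\kappa)\le \kappa-\|f_0\|_{G^\beta}$ forces $\|\mathcal{T}[f]\|_{G^\beta,T_*}\le\kappa$ whenever $\|f\|_{G^\beta,T_*}\le\kappa$, so $\mathcal{T}$ maps $G^\beta_{\kappa,T_*}$ into itself (continuity in $t$ of $\mathcal{T}[f]$ following from dominated convergence in \eqref{mesmildsoln}).
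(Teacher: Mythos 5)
Your proposal follows the same overall strategy as the paper: bound the semigroup term trivially using $a\ge 0$ to obtain $\|f_0\|_{G^\beta}$, then estimate the $G^\beta$-norm of the source term by splitting $J[f]$ into its quadratic and two cubic pieces, exploiting the constraint $w=y+z-x$ to factorize the weight $e^{\beta x}\le e^{\beta y}e^{\beta z}e^{-\beta w}$, and summing over unit-length energy slabs. You also correctly identify that the $\frac{1}{m(f)\sqrt{\beta}}$ contribution comes from combining the lower bound $a[f](x,\cdot)\ge 2\sqrt{x}\,m(f)$ (which the paper obtains from its explicit decomposition $a[f]=2\sqrt{x}\,m(f)+S[f]$ with $S[f]\ge 0$), the resulting bound $\int_0^t e^{-2(t-s)\sqrt{x}m(f)}ds\le \min(t,\frac{1}{2\sqrt{x}m(f)})$, and the damped kernel factor $\sqrt{w}\,e^{-\beta w}$.

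However, your expectation about \emph{where} the exponential damping must be exploited is inverted relative to the paper. You anticipate that the damping is essential near $x$ small, to tame the $1/\sqrt{x}$ prefactor. In fact, for $R\le 1$ the paper simply bounds $\exp(-\int_s^t a[f])\le 1$: the $1/\sqrt{x}$ singularity is harmless there because $\int_R^{R+1}dx/\sqrt{x}$ is uniformly bounded for $R\le 1$, and the time factor $t$ coming from $\int_0^t ds$ already yields $I^R[f](t)\le Ct\|f\|_{G^\beta}^2(1+\|f\|_{G^\beta})$. The exponential damping, via $a\ge 2\sqrt{x}m(f)$, is exploited precisely in the opposite regime $R>1$, where it is needed to compensate the large weight $e^{\beta x}$ (together with an auxiliary small parameter $\epsilon=\beta^{-1}$ controlling the region of small $w$ on the collision manifold), producing the $\frac{1}{m(f)\sqrt{\beta}}$ factor in the estimate. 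This is a minor misreading of the geometry of the estimate rather than a flaw in the chosen tools, but it would have misdirected effort if carried into a full write-up; otherwise the plan is sound and the final self-mapping argument is the same routine choice of $T_*$ and $\beta$ as in the paper.
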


\begin{proof}
To prove the theorem let us define:
\[
I^{R}[f](t):=\int_{R}^{R+1} e^{\beta x}\int_{0}^{t} ds \exp\left(  -\int
_{s}^{t} d\xi a[f](x,\xi)\right)  J[f](dx,s).
\]
We now write the right hand side in a form that is more amenable to computations. This involves a change of variable in one of the terms in the measure $J[f](dx,s)$, so that $I^{R}[f](t)$
can be expressed as:
\begin{eqnarray*}
&&I^{R}[f](t)=\int_{R}^{R+1}e^{\beta x}\int_0^t ds \exp\left(  -\int_{s}^{t} d\xi a[f](x,\xi)\right)\int_0^{\infty}\int_0^{\infty} W(w,x,y,z)f(s,dy)f(s,dz)f(s,dx)\mathbbm{1}(w=y+z-x\geq0)\\
&+&\int_0^{\infty}\int_0^{\infty}\int_0^{\infty} e^{\beta (y+z-w)}\int_0^t ds \exp\left(  -\int_{s}^{t} d\xi a[f](x,\xi)\right) W(w,x,y,z) f(s,dw)f(s,dy)f(s,dz)\mathbbm{1}(R<x=y+z-w<R+1)\\
&+&\int_R^{R+1} e^{\beta x} dx\int_0^t ds \exp\left(  -\int_{s}^{t} d\xi a[f](x,\xi)\right)\int_0^{\infty}\int_0^{\infty} W(w,x,y,z) f(s,dy)f(s,dz)\mathbbm{1}(w=y+z-x\geq0)\\
&&=I_{1}^{R}[f]+I_{3}^{R}[f]+I_{2}^{R}[f].
\end{eqnarray*}

We will prove the theorem by first showing that the estimate in the theorem holds
for $I^{R}[f](t)$ for all $R$. We notice that the above quantity contains one
term , namely $I_{2}^{R}[f]$, which is quadratic in $f$ and two terms, $I_{3}^{R}[f]$ and $I_{1}^{R}[f]$, which are cubic in $f$.  Between the cubic terms, the term $I_3^{R}[f]$ is a bit more difficult to estimate, while the other one is simpler, so in the following computation we will estimate $I_3$ in more detail.  Also, we will estimate these terms separately for two cases, namely,
when $R\leq1$ and $R>1$. We need to make two more remarks before we proceed to
compute the estimates: 1) In the quantity $J[f](dx,s)$ the domain can be
described in terms of three regions, namely:
\begin{align*}
\Delta_{1}(x)  &  :=\{(y,z): 0\leq y<x, 0\leq z<x, w=y+z-x\geq0\}\\
\Delta_{2}(x)  &  := \{(y,z): z\geq x, 0\leq y\leq x\}\\
\Delta_{3}(x)  &  := \{(y,z): y\geq x, z\geq x\}
\end{align*}
The domain of integration contains these three regions plus the region in y-z
plane which is symmetric to $\Delta_{2}(x)$. In the computation that follows
we will denote by $I^{R}_{j}(\Delta_{k})$ the part of $I^{R}_{j}[f]$ evaluated
on region $\Delta_{k}(x)$ of the y-z plane, for $j=1,2$ and $k=1,2,3$. 2) We
use the following decomposition for the term $a[f](x,s)$:
\begin{align}\label{adecomp}
a[f](x,s)  &  = 2\sqrt{x}m(f)+S[f](x,s)\\
&  = 2\sqrt{x}m(f)+2\left(  x\int_{0}^{\infty}f(s,dw) G(x,w)+\int_{0}^{\infty}%
\int_{0}^{\infty} f(s,dw)f(s,dy)W(x,w,y,z)\mathbbm{1}(z=x+w-y\geq0)\right) \\ \nonumber
& = 2\sqrt{x}m(f) + S_1[f](x,s)+S_2[f](x,s), \nonumber
\end{align}
where $m(f)$ is the associated mass and,
\begin{align*}
G(x,w)  &  = \frac{1}{3}\left(  \frac{w}{x}\right)  ^{\frac{3}{2}},
\text{if}\quad w\leq x,\\
&  = \frac{1}{3}+\frac{w}{x}-\sqrt{\frac{w}{x}}, \text{if}\quad w\geq x
\end{align*}
This means that:
\begin{align*}
\exp\left(  -\int_{s}^{t} d\xi a[f](x,\xi)\right)  <\exp\left(  -2(t-s)\sqrt
{x}m(f)\right)
\end{align*}
In the computation below we have to keep the above inequality in mind and
control the integrals either by choosing a short time (if $R\leq1$) or by the
above inequality (for $R>1$). We suppress the dependence of the measure $f$ on time, writing explicitly only the energy variable. Also $C$ denotes a generic numerical constant
whose value may change from line to line. Let us now consider the different
cases:

\vspace{0.2in} \textsl{Case 1: $R\leq1$} Let us first estimate the
quadratic term:
\vspace{0.2in}

i)
\begin{align*}
I^{R}_{2}(\Delta_{1})  &  = \int_{R}^{R+1}dx\frac{1}{\sqrt{x}}\int_{0}^{t} ds
\exp\left(  -\int_{s}^{t} d\xi a[f](x,\xi)\right)  \int_{0}^{x}f(dz)\int
_{x-z}^{x} dy f(dy)e^{\beta(y+z)}(\sqrt{w}e^{-\beta w})\\
&  \leq\frac{C}{\sqrt{\beta}}t||f||_{G^{\beta}}^{2},
\end{align*}
ii)
\begin{align*}
I^{R}_{2}(\Delta_{2})  &  = \int_{R}^{R+1}dx\int_{0}^{t} ds \exp\left(
-\int_{s}^{t} d\xi a[f](x,\xi)\right)  \int_{x}^{\infty} \int_{0}^{x}
f(dy)f(dz)e^{\beta x}\sqrt{\frac{y}{x}}\\
&  \leq\int_{0}^{t} ds\int_{R}^{R+1}dx e^{\beta x}\int_{x}^{\infty} \int
_{0}^{x}  f(dy)f(dz)\\
&  \leq C ||f||_{G^{\beta}} \int_{0}^{t} ds\int_{R}^{R+1}dx e^{\beta x}
\sum_{k=x}^{\infty}\int_{k}^{k+1}f(dz)e^{\beta z}e^{-\beta z}\\
&  \leq C t ||f||_{G_{\beta}}^{2} \int_{R}^{R+1}dx e^{\beta x} \sum
_{k=x}^{\infty} e^{-\beta k}\\
&  = Ct||f||_{G^{\beta}}^{2}\frac{e^{\beta}}{e^{\beta}-1}\\
&  \leq Ct||f||_{G^{\beta}}^{2}%
\end{align*}
iii)
\begin{align*}
I^{R}_{2}(\Delta_{3})  &  = \int_{R}^{R+1}dx\int_{0}^{t} ds \exp\left(
-\int_{s}^{t} d\xi a[f](x,\xi)\right)  \int_{x}^{\infty} \int_{x}^{\infty}
f(dy)f(dz)e^{\beta x}\\
&  \leq\int_{0}^{t} ds\int_{R}^{R+1}dxe^{\beta x}\left(  \sum_{k=x}^{\infty
}\int_{k}^{k+1}f(dz)e^{\beta z}e^{-\beta z}\right)  ^{2}\\
&  \leq t||f||_{G^{\beta}}^{2}\int_{R}^{R+1}dxe^{\beta x}\left(  \sum
_{k=x}^{\infty} e^{-\beta k}\right)  ^{2}\\
&  \leq Ct||f||_{G^{\beta}}^{2}%
\end{align*}
Let us now compute an upper bound for the cubic term $I^R_3$ .

i)
\begin{align*}
I^{R}_{3}(\Delta_{1})  
&  \leq\int_{0}^{t}ds\int_{0}^{R+1}\int_{0}^{R+1}\int_{0}^{R+1}
f(dw)f(dy)f(dz)e^{\beta(y+z-w)}\\
&  \leq Ct||f||_{G^{\beta}}^{3}%
\end{align*}
ii)
\begin{align*}
I^{R}_{3}(\Delta_{2})  &  = \int_{R}^{\infty}\int_{0}^{R+1}\int_{max(y,y+z-R-1)}^{min(z,y+z-R)}\int_{0}^{t} ds
\exp\left(  -\int_{s}^{t} d\xi a[f](y+z-w,\xi)\right)  f(dw)f(dy)f(dz)\sqrt{\frac{y}{y+z-w}}e^{\beta (y+z-w)}\\
&  \leq\int_{0}^{t} ds \left[  \int_{R}^{R+1}f(dz) e^{\beta z}\int_{0}%
^{R+1}f(dy) e^{\beta y}\int_{min(0,y+z-(R+1))}^{y+z-R}f(dw)\right. \\
&  {}+\left.  e^{\beta(R+1)}\int_{R+1}^{\infty}f(dz)\int_{0}^{R+1}f(dy)\int
_{min(0,y+z-(R+1))}^{y+z-R}f(dw)\right] \\
&  \leq C\int_{0}^{t} ds||f||_{G^{\beta}}^{2}\left(  ||f||_{G^{\beta}%
}+e^{\beta(R+1)}\sum_{k=R+1}^{\infty}\int_{k}^{k+1}f(dz)e^{\beta z}e^{-\beta
z}\right) \\
&  \leq Ct||f||_{G^{\beta}}^{3}%
\end{align*}
iii)
\begin{align*}
I^{R}_{3}(\Delta_{3})  &
 \leq\int_{0}^{t} ds \int_{R}^{\infty}f(dz)\int_{R}^{\infty}
f(dy)\int_{y+z-(R+1)}^{y+z-R}f(dw)e^{\beta w}\\
&  \leq\int_{0}^{t} ds \left(  \sum_{k=R}^{\infty}\int_{k}^{k+1}
f(dy)e^{\beta y}e^{-\beta y}\right)  ^{2}||f||_{G^{\beta}}\\
&  \leq t \left(  \sum_{k=R}^{\infty}e^{-\beta k}\right)  ^{2}%
||f||_{G^{\beta}}^{3}\\
&  \leq Ct||f||_{G^{\beta}}^{3}%
\end{align*}
Putting all of the above estimates together we have that for $R\leq1$,
\begin{equation}
\label{smallR}I^{R}[f](t)\leq Ct ||f||_{G^{\beta}}^{2}%
(1+||f||_{G^{\beta}}).
\end{equation}
\textsl{Case 2: $R> 1$} We now consider the more delicate case, i.e., when
$R>1$. In our estimations in this case, besides controlling the time, we also
need to introduce a ``small'' parameter $\epsilon$ which we would fix in terms
of $\beta$ in the course of our computations. As before, we estimate the terms
in each of the three regions $\Delta_{j}$.

i)
\begin{align*}
&  I^{R}_{2}(\Delta_{1})\\
&  = \int_{R}^{R+1}dx\int_{0}^{t} ds \exp\left(  -\int_{s}^{t} d\xi
a[f](x,\xi)\right)  \int_{0}^{x} \int_{x-z}^{x} f(dy)f(dz)\sqrt{\frac{w}%
{x}}e^{\beta x}\mathbbm{1}(w=y+z-x\geq 0)\\
&  \leq\int_{R}^{R+1}dx\int_{0}^{t} ds e^{-2(t-s)\sqrt{x}m(f)}e^{\beta
x}\left[  \int_{0}^{\infty}\int_{0}^{\infty}\mathbbm{1}(0\leq
w=y+z-x\leq\epsilon)f(dy)f(dz)\sqrt{\frac{w}{x}}\right. \\
&  {}\left.  +\int_{\epsilon}^{x} \int_{x+\epsilon-z}^{x} f(dy)f(dz)\sqrt
{\frac{w}{x}} \mathbbm{1}(w=y+z-x>\epsilon)\right] \\
&  \leq\int_{R}^{R+1}dx\int_{0}^{t} ds e^{-2(t-s)\sqrt{x}m(f)}\frac{1}%
{\sqrt{x}}\left[  \sqrt{\epsilon}\int_{0}^{x+\epsilon}\int_{max(0,x-z)}%
^{x+\epsilon-z} f(dy)f(dz)e^{\beta(y+z-w)}\right. \\
&  {}\left.  +\int_{\epsilon}^{x} \int_{x+\epsilon-z}^{x}f(dy)f(dz)\sqrt
{w}e^{\beta(y+z-w)} \right] \\
&  \leq\sqrt{\epsilon}||f||_{G^{\beta}}\int_{R}^{R+1}dx\int_{0}^{t} ds
e^{-2(t-s)\sqrt{x}m(f)}\frac{1}{\sqrt{x}}\sum_{k=0}^{\lceil x+\epsilon\rceil
}\int_{k}^{k+1}f(dz)e^{\beta z}\\
&  +\int_{R}^{R+1}dx\int_{0}^{t} ds \frac{e^{-2(t-s)\sqrt{x}m(f)}}{\sqrt{x}%
}\sum_{k=0}^{\lceil x \rceil}\sum_{l=0}^{\lceil x \rceil}%
\mathbbm{1}(k+l>x+\epsilon)\int_{k}^{k+1}f(dz)e^{\beta z}\int_{l}^{l+1}
e^{\beta y}f(dy)\sqrt{w}e^{-\beta w}\\
&  \leq\sqrt{\epsilon}||f||_{G^{\beta}}^{2}\int_{R}^{R+1}dx\frac{\lceil
x+\epsilon\rceil}{\sqrt{x}}\int_{0}^{t} ds e^{-2(t-s)\sqrt{x}m(f)}\\
&  + ||f||_{G^{\beta}}^{2}\int_{R}^{R+1}dx\frac{1}{\sqrt{x}}\int_{0}^{t} ds
e^{-2(t-s)\sqrt{x}m(f)}\sum_{m=\epsilon}^{\epsilon+\lceil x \rceil}%
\sum_{s=-\lfloor x+m\rfloor}^{\lceil x+m \rceil}\sqrt{m}e^{-\beta m}\\
&  \leq C||f||_{G^{\beta}}^{2}\frac{1}{m(f)}\left(  \sqrt{\epsilon
}+e^{-\beta}\right)  ,
\end{align*}
where we choose $\epsilon=\beta^{-1}$ in order to ensure that the function
$g(x)=\sqrt{x}e^{-\beta x}$ is monotone decreasing on $(\epsilon,\infty)$.

ii)
\begin{align*}
I^{R}_{2}(\Delta_{2})  &  = \int_{R}^{R+1}dx\int_{0}^{t} ds \exp\left(
-\int_{s}^{t} d\xi a[f](x,\xi)\right)  \int_{x}^{\infty} \int_{0}^{x}
f(dy)f(dz)e^{\beta x}\sqrt{\frac{y}{x}}\\
&  \leq\int_{0}^{t} ds\int_{R}^{R+1}dx e^{\beta x}\int_{x}^{\infty} \int
_{0}^{x} f(dy)f(dz)\\
&  \leq C \int_{0}^{t} ds\int_{R}^{R+1}dx e^{\beta x} \sum_{k=x}^{\infty}%
\int_{k}^{k+1}f(dz)e^{\beta z}e^{-\beta z}\sum_{l=0}^{x}\int_{l}^{l+1}
f(dy)e^{\beta y}e^{-\beta y}\\
&  \leq Ct||f||_{G^{\beta}}^{2}%
\end{align*}
The computation of $I^{R}_{2}(\Delta_{3})$ in this case can be done just like
in the case $R\leq1$ with the result that
\[
I^{R}_{2}(\Delta_{3})\leq Ct||f||_{G^{\beta}}^{2}.
\]
We now estimate the cubic terms:
\begin{align*}
&  I^{R}_{3}(\Delta_{1})\\
&  =      \int_{0}^{R+1}\int_{0}^{R+1}\int_{max(0, y+z-R-1)}^{min(y,z,y+z-R)}e^{\beta (y+z-w)}\int_{0}^{t} ds \exp\left(  -\int_{s}^{t} d\xi
a[f](y+z-w,\xi)\right)  f(dw)f(dy)f(dz)\sqrt{\frac
{w}{y+z-x}}\\
&  \leq \int_{0}^{R+1}\int_{0}^{R+1}\int_{max(0,y+z-R-1)}^{min(y,z,y+z-R)}\int_{0}^t ds e^{-2(t-s)\sqrt{R}m(f)}f(dw)f(dy)f(dz)\sqrt{\frac{w}{R}}\left(
\mathbbm{1}(0\leq w\leq\epsilon)\right. \\
&  {}\left.  +\mathbbm{1}(w>\epsilon)\right) \\
\end{align*}
Now
\begin{align*}
&  \int_{0}^{R+1}\int_{0}^{R+1}\int_{max(0,y+z-R-1)}^{min(y,z,y+z-R)}\int_{0}^t ds e^{-2(t-s)\sqrt{R}m(f)}f(dw)f(dy)f(dz)\sqrt{\frac{w}{R}}
\mathbbm{1}(0\leq w\leq\epsilon) e^{\beta (y+z-w)}\\
&  \leq\int_{0}^{t} ds e^{-2(t-s)\sqrt{R}m(f)}\sqrt{\frac{\epsilon}{R}}\int_{0}^{R+1}\int_{0}^{\epsilon}\int_{max(0, w-z+R)}^{w-z+R+1} f(dy)f(dw)f(dz)e^{\beta (R+1)}\\
&  \leq\sqrt{\epsilon}||f||_{G^{\beta}}^{2} \frac{1}{2Rm(f)}  \sum_{k=0}^{\lceil R+1 \rceil}\int_{k}^{k+1}dz f(dz)e^{\beta
z}\\
&  \leq\frac{\sqrt{\epsilon}}{m(f)}||f||_{G^{\beta}}^{3}.
\end{align*}
The other part is estimated as:
\begin{align*}
& \int_{0}^{R+1}\int_{0}^{R+1}\int_{max(0,y+z-R-1)}^{min(y,z,y+z-R)}\int_{0}^t ds e^{-2(t-s)\sqrt{R}m(f)}f(dw)f(dy)f(dz)\sqrt{\frac{w}{R}} e^{\beta (y+z-w)}
\mathbbm{1}(w>\epsilon)\\
&  \leq\int_{0}^{t} ds e^{-2(t-s)\sqrt{R}m(f)}\frac{1}{\sqrt{R}}\int_{0}^{R+1}\int_{\epsilon}^{R+1}\int_{max(0, w-z+R)}^{w-z+R+1} \sqrt{w}f(dy)f(dw)f(dz)e^{\beta
(y+z-w)}\\
&  \leq\int_{0}^{t} ds e^{-2(t-s)\sqrt{R}m(f)}\frac{1}{\sqrt{R}}\int_{\epsilon}
^{R+1}\sqrt{w}f(dw)e^{-\beta w}\int_{0}^{R+1}f(dz) e^{\beta z}\int_{max(0,w+R-z)}^{w+R-z+1}f(dy)e^{\beta y}\\
&  \leq\int_{0}^{t} ds e^{-2(t-s)\sqrt{R}m(f)}\frac{1}{\sqrt{R}}||f||_{G^{\beta}
}\sum_{\epsilon}^{\epsilon+\lceil R+1 \rceil}\int_{k}^{k+1} f(dw)\sqrt
{w}e^{-\beta w}\sum_{l=0}^{\lceil R+1 \rceil}\int_{l}^{l+1}f(dz)e^{\beta
z}\\
&  \leq||f||_{G^{\beta}}^{3}\frac{1}{2Rm(f)}\lceil R+1 \rceil\sum_{k=\epsilon}^{\epsilon+1+\lceil R\rceil}\sqrt
{k}e^{-2\beta k}\\
&  \leq C||f||_{G^{\beta}}^{3}\frac{1}{m(f)}\left(  \sqrt{\epsilon
}e^{-2\beta\epsilon}+e^{-2\beta}\right)  .
\end{align*}
Thus:
\[
I^{R}_{3}(\Delta_{1})\leq C||f||_{G^{\beta}}^{3}\left( \frac{\sqrt{\epsilon}+e^{-2\beta}}{m(f)}\right)  .
\]
ii)
\begin{align*}
&  I^{R}_{3}(\Delta_{2})\\
&  =\int_{R}^{\infty}\int_{0}^{R+1}\int_{max(y,y+z-R-1)}^{min(z,y+z-R)} e^{\beta (y+z-w)}\int_{0}^{t} ds \exp\left(  -\int_{s}^{t} d\xi
a[f](y+z-w,\xi)\right)  f(dw)f(dy)f(dz)\sqrt
{\frac{y}{y+z-w}}\\
&  \leq\int_{R}^{R+1}\int_{0}^{R+1}\int_{max(y,y+z-R-1)}^{min(z,y+z-R)} e^{\beta (y+z-w)}\int_{0}^{t} ds e^{\left(  -2(t-s)\sqrt{R}m(f)\right)}  f(dw)f(dy)f(dz)\sqrt
{\frac{y}{y+z-w}}\\
&  {}+\int_{R+1}^{\infty}\int_{max(y,y+z-R-1)}^{min(z,y+z-R)} e^{\beta (y+z-w)}\int_{0}^{t} ds e^{\left(  -2(t-s)\sqrt{R}m(f)\right)} f(dw)f(dy)f(dz)\sqrt
{\frac{y}{y+z-w}}%
\end{align*}
Now:
\begin{align*}
& \int_{R}^{R+1}\int_{0}^{R+1}\int_{max(y,y+z-R-1)}^{min(z,y+z-R)} e^{\beta (y+z-w)}\int_{0}^{t} ds e^{\left(  -2(t-s)\sqrt{R}m(f)\right) } f(dw)f(dy)f(dz)\sqrt
{\frac{y}{y+z-w}}\\
&  \leq \int_{R}^{R+1}\int_{0}^{R+1}\int_{max(0,w-z+R)}^{min(w,w-z+R+1)} e^{\beta (y+z-w)}\int_{0}^{t} ds e^{\left(  -2(t-s)\sqrt{R}m(f)\right)}  f(dy)f(dw)f(dz)\\
&  \leq Ct||f||_{G^{\beta}}^{3}.
\end{align*}
The other part is evaluated as:
\begin{align*}
&  \int_{R+1}^{\infty}\int_0^{R+1}\int_{max(y,y+z-R-1)}^{min(z,y+z-R)} e^{\beta (y+z-w)}\int_{0}^{t} ds e^{\left(  -2(t-s)\sqrt{R}m(f)\right) } f(dw)f(dy)f(dz)\sqrt
{\frac{y}{y+z-w}}\\
&  \leq t\sqrt{\frac{\epsilon}{R}}\int_{R+1}^{\infty}\int_{0}^{\epsilon}%
\int_{max(y, y+z-R-1)}^{min(z, y+z-R)} f(dw)f(dy)f(dz)e^{\beta(R+1)}+\\
&  \quad+t \frac{e^{\beta(R+1)}}{\sqrt{R}}\int_{R+1}^{\infty}dz\int_{\epsilon}^{R+1}\int_{max(y,y+z-R-1)}^{min(z,y+z-R)}\sqrt{y}
f(dw)f(dy)f(dz)\\
&  \leq t\sqrt{\frac{\epsilon}{R}}||f||_{G^{\beta}}^{3} e^{\beta(R+1)}%
\sum_{k=R+1}^{\infty}e^{-\beta k}+\\
&  \quad+||f||_{G^{\beta}}\frac{e^{\beta(R+1)}}{\sqrt{R}}\sum_{k=R+1}^{\infty}\int_{k}^{k+1}dz f(z)\sum
_{l=\epsilon}^{\epsilon+1+\lceil R\rceil}\int_{l}^{l+1}dy f(y)\sqrt{y}\\
&  \leq Ct||f||_{G^{\beta}}^{3}\frac{1}{\sqrt{R}}\left(  \sqrt{\epsilon}+e^{-\beta
}\right)
\end{align*}
Finally the term $I^{R}_{3}(\Delta_{3})$ is evaluated in the same way as in
the case $R<1$. In all the above estimates we keep in mind the suppressed dependence of the measures $f$ on the time variable $s$. Putting the estimates together and using the fact that
$\epsilon= \beta^{-1}$  we obtain, after taking supremum over $R$ (since the right hand side of the above estimates do not depend upon $R$) and $s$, the following bound:
\begin{eqnarray*}
||\mathcal{T}[f]||_{G^{\beta}}(t)\leq  ||f_0||_{G^{\beta}} + C\left(  t+\frac{1}{m(f)\sqrt{\beta}}\right)
\left(\sup_{0\leq s\leq t}||f(s,.)||_{G^{\beta}}\right)^{2}\left(1+\sup_{0\leq s\leq t}||f(s,.)||_{G^{\beta}}\right)
\end{eqnarray*}

The bound above means that, in $G^{\beta}_{\kappa,T}$, given $f(0,.)$, such that $||f_0||_{G^{\beta}}<\kappa$, one can choose a sufficiently small time $T_*$ (depending on $\kappa$) and sufficiently large $\beta$  (which of course depends on $m(f)$) such that, for $T<T_{*}$ and $0\leq t\leq T$ :
$$||\mathcal{T}[f]||_{G^{\beta}}(t)\leq  ||f_0||_{G^{\beta}} + C\left( T+\frac{1}{m(f)\sqrt{\beta}}\right)\kappa^2 (1+\kappa)\leq \kappa $$i.e., the map $\mathcal{T}$ maps $G^{\beta}_{\kappa,T}$ to itself .

\end{proof}

We now prove the following lemma:
\begin{lemma} For $T<T_*$ the map $\mathcal{T}$ is weakly sequentially continuous in $G^{\beta}_{\kappa,T}$.
\end{lemma}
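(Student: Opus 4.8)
Fix $t\in[0,T]$. Since $G^{\beta}_{\kappa}$ is weakly compact (by the Proposition above) and metrizable, it is enough to take an arbitrary sequence $f_j\to f$ in $G^{\beta}_{\kappa,T}$, pass to a weakly convergent subsequence of $\mathcal{T}[f_j](t,\cdot)$, and identify its limit; so the goal reduces to proving that, for every $\phi\in C_b(\mathbb{R}_+)$,
$$\int_0^\infty\phi(x)\,\mathcal{T}[f_j](t,dx)\ \longrightarrow\ \int_0^\infty\phi(x)\,\mathcal{T}[f](t,dx),$$
and I would treat the two terms on the right of \eqref{mesmildsoln} — the initial–datum term and the $J$–term — separately. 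Both require, as a preliminary, the pointwise convergence of the exponential damping factors, which in turn rests on the convergence of $a[f_j]$.

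First I would establish that $a[f_j](x,s)\to a[f](x,s)$ for every $s$, locally uniformly in $x$, via the decomposition \eqref{adecomp}, $a[f]=2\sqrt{x}\,m(f)+S_1[f]+S_2[f]$. The mass term converges because $m(f_j)\to m(f)$, which is exactly the content of the moment–convergence lemma proved above. The term $S_1[f_j](x,s)=2x\int_0^\infty G(x,w)f_j(s,dw)$ involves only one integration against $f_j$ with a kernel that is continuous in $x$ and grows only linearly in $w$, so it converges by weak convergence of $f_j(s,\cdot)$ together with the uniform exponential tail bound defining $G^\beta_\kappa$ (which lets one truncate $G(x,\cdot)$ exactly as in the moment lemma). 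The term $S_2[f]$ is a double integral against $f(s,\cdot)\otimes f(s,\cdot)$ of the bounded kernel $\Psi(x,w,y)=\min\!\bigl(\sqrt{x},\sqrt{w},\sqrt{y},\sqrt{(x+w-y)_+}\bigr)/\sqrt{x}$ (and its symmetric counterpart); the key point here is that this kernel \emph{vanishes} on the reaction boundary $\{x+w-y=0\}$, so $\Psi$ is jointly continuous and bounded by $1$, whence $S_2[f](\cdot,s)$ is continuous in $x$ and $S_2[f_j]\to S_2[f]$ follows from weak convergence of $f_j(s,\cdot)\otimes f_j(s,\cdot)$ (using uniform tightness to pass from $C_c$ to $C_b$ test functions). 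Consequently $a[f](\cdot,s)$ is continuous, $a[f_j](\cdot,s)\to a[f](\cdot,s)$ locally uniformly, and hence $\exp\!\bigl(-\int_0^t a[f_j](x,s)\,ds\bigr)\to\exp\!\bigl(-\int_0^t a[f](x,s)\,ds\bigr)$ pointwise, the convergence under the time integral being justified by dominating $a[f_j]$ via $\sup_j m(f_j)$ and the uniform bounds on $S_1,S_2$. Multiplying by $\phi$ and integrating against the fixed finite measure $f(0,\cdot)$, bounded convergence finishes the first term.

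For the $J$–term, $\int_0^t ds\int_0^\infty\phi(x)\exp\!\bigl(-\int_s^t a[f_j](x,\xi)\,d\xi\bigr)J[f_j](s,dx)$, I would first show $J[f_j](s,\cdot)\to J[f](s,\cdot)$ weakly for each $s$ — by the same scheme, since $J[f]$ is built from the cubic integrals against $f^{\otimes3}$, the quadratic integrals against $f^{\otimes2}$, and the $dx$–term (handled by Fubini and continuity in $(y,z)$ of the inner Lebesgue integral), each time combining weak convergence of the product measures with uniform tightness. From the preceding lemma (the $I^R[f]$ estimates behind \eqref{bound}) the measures $J[f_j](s,\cdot)$ lie in a fixed ball of $G^\beta$, hence are uniformly tight. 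The modulating factors $g_j(x):=\phi(x)\exp\!\bigl(-\int_s^t a[f_j](x,\xi)\,d\xi\bigr)$ are bounded by $\|\phi\|_\infty$ and converge locally uniformly to the continuous function $g(x):=\phi(x)\exp\!\bigl(-\int_s^t a[f](x,\xi)\,d\xi\bigr)$ by the previous paragraph. I would then pass to the limit through the splitting
$$\int g_j\,dJ[f_j]-\int g\,dJ[f]=\int(g_j-g)\,dJ[f_j]+\left(\int g\,dJ[f_j]-\int g\,dJ[f]\right),$$
where the second bracket tends to $0$ because $g\in C_b$ and $J[f_j](s,\cdot)\to J[f](s,\cdot)$ weakly, while the first is controlled by cutting at a large radius $R$ (uniform tightness of the $J[f_j]$ kills the tail uniformly in $j$) and then using $g_j\to g$ uniformly on $[0,R]$. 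A final dominated convergence in $s\in[0,t]$ — legitimate because the $s$–integrand is bounded uniformly in $j$ by the estimates behind \eqref{bound} — yields the claim, and one checks the bounds are uniform in $t\in[0,T]$ so the convergence is in $G^\beta_{\kappa,T}$.

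The main obstacle is the weak convergence $J[f_j](s,\cdot)\to J[f](s,\cdot)$ itself. Unlike the kernels appearing in $a[f]$, the kernels in the first two pieces of $J[f]$ carry the factor $1/\sqrt{w}$ with $w=y+z-x$ a \emph{produced} variable, and $\min(\sqrt w,\sqrt x,\sqrt y,\sqrt z)/\sqrt w$ equals $1$, not $0$, on the reaction boundary $w=0$; thus the integrand, as a function of $(x,y,z)$, has a genuine jump across the surface $\{x=y+z\}$ (respectively $\{w=0\}$). To pass to the limit one therefore needs the limiting product measures $f^{\otimes3}$, $f^{\otimes2}$ to assign zero mass to these (lower–dimensional) reaction surfaces; this is automatic when $f(s,\cdot)$ has no atoms, and in the general case must be handled either by the convention fixing $J$ at the (at most countably many) coincidence configurations or by a mild restriction, together with the usual bookkeeping over the three regions $\Delta_1,\Delta_2,\Delta_3$ into which the domain of $J$ was split in the preceding lemma. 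Once the weak convergence of these collision integrals is in place, the rest is a routine repetition of the estimates already carried out for the bound \eqref{bound}.
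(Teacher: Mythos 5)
Your proposal follows essentially the same skeleton as the paper's proof: first establish $a[f_j](x,s)\to a[f](x,s)$ pointwise, then pass to the limit in the cubic and quadratic pieces of the $J$--term. The two arguments differ in how the exponential factors are spliced in. The paper works directly with the full integral $\int_0^t ds\int\phi(x)\frac{\Psi_n(t,x)}{\Psi_n(s,x)}J[f_n](s,dx)$, writes the error as $\Delta\mathcal{I}^1_{1,n}+\Delta\mathcal{I}^2_{1,n}$ (the first coming from replacing $\Psi_n/\Psi_n$ by $\Psi/\Psi$, the second from replacing $f_n$ by $f$), and controls $\Delta\mathcal{I}^1_{1,n}$ with Egorov's theorem; you instead establish $J[f_j](s,\cdot)\to J[f](s,\cdot)$ weakly as a separate step and then use the splitting $\int g_j\,dJ[f_j]-\int g\,dJ[f]=\int(g_j-g)\,dJ[f_j]+(\int g\,dJ[f_j]-\int g\,dJ[f])$, controlling the first bracket by uniform tightness plus local--uniform convergence of the modulating factors. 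Both organizations are sound and interchangeable here; your version is perhaps slightly cleaner since it isolates the weak continuity of $f\mapsto J[f]$.

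The place where you go beyond the paper is in flagging the discontinuity of the collision integrand across the reaction manifold, and this is a genuine point. The paper asserts that $\zeta_{t,s}(w,x,y,z)=\frac{\Psi(t,x)}{\Psi(s,x)}W(w,x,y,z)\phi(x)$ is continuous and bounded for $x>0$, but the integrand also carries the indicator $\mathbbm{1}(w=y+z-x\geq 0)$, and one of the three pieces of $J[f]$ does jump across the boundary: with the kernel of (\ref{Jdef}), $W(x,w,y,z)=\min(\sqrt x,\sqrt w,\sqrt y,\sqrt z)/\sqrt x$, it is the $f(s,dw)$--piece (your $\mathcal{I}_{2,n}$, after substituting $x=y+z-w$) that is problematic, since $\min/\sqrt x\to 1$ as $x=y+z-w\to 0^+$ while the indicator vanishes for $w>y+z$; if one takes the measure-valued definition literally with $\min/\sqrt w$, the jump sits instead across $\{w=0\}$ in the $f(s,dx)$--piece. (So your localization to ``the first two pieces'' is slightly off — it is exactly one of the nonLebesgue pieces, and the Lebesgue-$dx$ piece is always harmless.) Passing to the weak limit across such a discontinuity requires the limit measure to assign zero mass to the jump set, which is automatic for non-atomic $f$ but not in general for measures in $G^\beta_\kappa$, which do admit atoms. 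You flag this and propose either a convention at coincidence configurations or a restriction to non-atomic measures, but you do not resolve it — and, to be fair, the paper silently assumes continuity of $\zeta$ and so has the same gap. To close it one would either restrict the fixed-point space to measures without atoms, argue that the fixed point produced assigns no mass to the relevant translates, or replace $\phi$ by a family of test functions vanishing near the boundary and use an approximation argument together with the tail estimates of the previous lemma.
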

\begin{proof}
Let $\{f_n\}$ be a sequence in $G^{\beta}_{\kappa,T}$ such that $f_n\rightarrow f$ weakly in $G^{\beta}_{\kappa,T}$. We want to show that $\mathcal{T}$ is continuous, i.e., that we can pass to the limit in $\int_{0}^{\infty}\phi(x)\mathcal{T}[f_n](t,dx)$. We do this in several steps as follows:

First of all let us notice that:
\begin{eqnarray*}
a_n(s,x)&=& \int_{0}^{\infty}\int_0^{\infty} W(w,x,y,z)f_n(s,dw)f_n(s,dy)\mathbbm{1}(z=x+w-y\geq 0) \\
&{}&+ \int_{0}^{\infty}\int_0^{\infty} W(w,x,y,z)f_n(s,dw)f_n(s,dz)\mathbbm{1}(y=x+w-z\geq 0)\\
&{}&+ \int_{0}^{\infty}\int_0^{\infty} W(w,x,y,z)f_n(s,dw)dy\mathbbm{1}(z=x+w-y\geq 0)\\
&=& \int_{0}^{\infty}\int_0^{\infty} W(w,x,y,z)f_n(s,dw)f_n(s,dy)\mathbbm{1}(z=x+w-y\geq 0)\\
&{}&+\int_{0}^{\infty}\int_0^{\infty} W(w,x,y,z)f_n(s,dw)f_n(s,dz)\mathbbm{1}(y=x+w-z\geq 0)\\
&{}&+\int_{0}^{\infty}f_n(s,dw)\int_{-(x+w)}^{x+w}W(w,x,\frac{1}{2}(x+w+\xi),\frac{1}{2}(x+w-\xi))d\xi
\end{eqnarray*}

We then notice that for $x>0$, $W(w,x,y,z)$ is a bounded continuous function of  $w, y, z$. This means, by virtue of the weak convergence of $f_n$ to $f$ that:

$$\lim_{n\rightarrow\infty}\int_0^{\infty}W(w,x,y,z)f_n(s,dw)\mathbbm{1}(z=x+w-y\geq 0)=\int_0^{\infty}W(w,x,y,z)f(s,dw)\mathbbm{1}(z=x+w-y\geq 0),$$
and hence

$$\lim_{n\rightarrow\infty}\int_{0}^{\infty}f_n(s,dy)\int_0^{\infty}W(w,x,y,z)f_n(s,dw)\mathbbm{1}(z=x+w-y\geq 0)=\int_0^{\infty}f(s,dy)\int_0^{\infty}W(w,x,y,z)f(s,dw)\mathbbm{1}(z=x+w-y\geq0).$$

Similarly: $$\lim_{n\rightarrow\infty}\int_{0}^{\infty}\int_0^{\infty} W(w,x,y,z)f_n(s,dw)f_n(s,dz)\mathbbm{1}(y=x+w-z\geq 0)=\int_{0}^{\infty}\int_0^{\infty} W(w,x,y,z)f(s,dw)f(s,dz)\mathbbm{1}(y=x+w-z\geq 0).$$

For the limiting value of the last term, let us define:$$\int_{-(x+w)}^{x+w}W(w,x,\frac{1}{2}(x+w+\xi),\frac{1}{2}(x+w-\xi))d\xi=\Phi(w,x).$$ Then $\Phi(w,x)$ is a continuous unbounded function such that on any set $[L,\infty]$, with $L>0$ we have:
$$\int_L^{\infty}f_n(s,dw)\Phi(w,x)\leq 2\int_L^{\infty}(x+w)f_n(s,dw)\leq C\left(xe^{-\beta L} + \frac{1}{\sqrt{L}}E(f_n)\right),$$
where $E(f_n)$ denotes the energy associated with $f_n$ and is finite. This means the integral whose limiting value we are interested in, can be made arbitrarily small on any $[L,\infty]$ by choosing a large enough $L$, while convergence is guaranteed on compact sets $(0,L)$. Thus:
$$\lim_{n\rightarrow\infty}\int_0^{\infty}f_n(s,dw)\Phi(w,x)=\int_0^{\infty}f(s,dw)\Phi(w,x).$$

The above computations then mean that:
$$\lim_{n\rightarrow\infty} a_n(s,x)=a(s,x)>0,$$ where
\begin{eqnarray*}
a(s,x)&=&\int_{0}^{\infty}\int_0^{\infty} W(w,x,y,z)f(s,dw)f(s,dy)\mathbbm{1}(z=x+w-y\geq 0) \\
&{}&+ \int_{0}^{\infty}\int_0^{\infty} W(w,x,y,z)f(s,dw)f(s,dz)\mathbbm{1}(y=x+w-z\geq 0)\\
&{}&+ \int_{0}^{\infty}\int_0^{\infty} W(w,x,y,z)f(s,dw)dy\mathbbm{1}(z=x+w-y\geq 0).
\end{eqnarray*}
Also, $$\lim_{n\rightarrow\infty}\Psi_n(s,x)=\Psi(s,x),$$ where $$\Psi_n(s,x)=e^{-\int_0^{t}a_n(s,x)ds}.$$
In the following computations we will encounter the term below:
$$\frac{\Psi_n(t,x)}{\Psi_n(s,x)}=e^{-\int_s^t a_n(\xi,x)d\xi}<1.$$

We now focus our attention on:
$$\mathcal{I}_n=\int_{0}^{\infty}\phi(x)\int_0^t ds \frac{\Psi_n(t,x)}{\Psi_n(s,x)} J[f_n](s,dx). $$
This consists of two cubic and one quadratic terms in $f_n$. We look at the relevant limits below:
\begin{eqnarray*}
\mathcal{I}_{1,n}&=&\int_0^t\int_0^{\infty}\int_0^{\infty}\int_0^{\infty}\frac{\Psi_n(t,x)}{\Psi_n(s,x)}\phi(x)W(w,x,y,z)f_n(s,dx)f_n(s,dy)f_n(s,dz)ds\\
&=& \int_0^t\int_0^{\infty}\int_0^{\infty}\int_0^{\infty}\frac{\Psi(t,x)}{\Psi(s,x)}\phi(x)W(w,x,y,z)f(s,dx)f_n(s,dy)f_n(s,dz)ds\\
&{}&+\int_0^t\int_0^{\infty}\int_0^{\infty}\int_0^{\infty}\left[\frac{\Psi_n(t,x)}{\Psi_n(s,x)}f_n(s,dx)-\frac{\Psi(t,x)}{\Psi(s,x)}f(s,dx)\right] W(w,x,y,z)\phi(x)f_n(s,dy)f_n(s,dz),
\end{eqnarray*}

where we write the last term as follows:
\begin{eqnarray*}
\Delta\mathcal{I}_{1,n} &=& \int_0^t\int_0^{\infty}\int_0^{\infty}\int_0^{\infty}\left[\frac{\Psi_n(t,x)}{\Psi_n(s,x)}f_n(s,dx)-\frac{\Psi(t,x)}{\Psi(s,x)}f_n(s,dx)\right.\\
&{}&\left.+\frac{\Psi(t,x)}{\Psi(s,x)}f_n(s,dx)-\frac{\Psi(t,x)}{\Psi(s,x)}f(s,dx)\right] W(w,x,y,z)\phi(x)f_n(s,dy)f_n(s,dz)\\
&{}&\leq\int_0^t\int_0^{\infty}\int_0^{\infty}\left(\int_0^{\infty}|\frac{\Psi_n(t,x)}{\Psi_n(s,x)}-\frac{\Psi(t,x)}{\Psi(s,x)} |f_n(s,dx)W(w,x,y,z)\phi(x)\right)f_n(s,dy)f_n(s,dz)\\
&{}&+ \int_0^t\int_0^{\infty}\int_0^{\infty}\left(|\int_0^{\infty}\frac{\Psi(t,x)}{\Psi(s,x)}W(w,x,y,z)\phi(x)f_n(s,dx)-\int_0^{\infty}\frac{\Psi(t,x)}{\Psi(s,x)}W(w,x,y,z)\phi(x)f(s,dx)|\right)f_n(s,dy)f_n(s,dz)\\
&{}&=\Delta\mathcal{I}^1_{1,n}+\Delta\mathcal{I}^2_{1,n}.
\end{eqnarray*}

We will first estimate the error term $\Delta\mathcal{I}^2_{1,n}$.
Let $$\zeta_{t,s}(w,x,y,z)=\frac{\Psi(t,x)}{\Psi(s,x)}W(w,x,y,z)\phi(x).$$ Then for $x>0$ $\zeta_{t,s}(w,x,y,z)$ is a continuous bounded function of its arguments and hence:

$$\lim_{n\rightarrow\infty}\int_0^{\infty}\zeta_{t,s}(w,x,y,z)f_n(s,dx)=\int_0^{\infty}\zeta_{t,s}(w,x,y,z)f(s,dx),$$
which means that $\Delta\mathcal{I}^2_{1,n}$ can be made arbitrarily small by choosing $n$ large enough.

We now turn to the other error term $\Delta\mathcal{I}^1_{1,n}.$ Let us first note that $$\frac{\Psi_n(t,x)}{\Psi_n(s,x)}\rightarrow \frac{\Psi(t,x)}{\Psi(s,x)},$$
as $n\rightarrow\infty$ point wise by dominated convergence.
We can now use Egorov's theorem, which asserts that for $\epsilon>0$ there is a measurable subset $B$ of $\mathbb{R}_{+}$ such that $f_n(B)<\epsilon$ and $\frac{\Psi_n(t,x)}{\Psi_n(s,x)}$
converges to $\frac{\Psi(t,x)}{\Psi(s,x)}$ uniformly on $\mathbb{R}_{+} \setminus B.$

Then:
\begin{eqnarray*}
\Delta\mathcal{I}^1_{1,n}&=&\int_0^t\int_0^{\infty}\int_0^{\infty}\left(\int_{\mathbb{R}_{+}\setminus B}|\frac{\Psi_n(t,x)}{\Psi_n(s,x)}-\frac{\Psi(t,x)}{\Psi(s,x)} |f_n(s,dx)W(w,x,y,z)\phi(x)\right)f_n(s,dy)f_n(s,dz)\\
&{}&+ \int_0^t\int_0^{\infty}\int_0^{\infty}\left(\int_{ B}|\frac{\Psi_n(t,x)}{\Psi_n(s,x)}-\frac{\Psi(t,x)}{\Psi(s,x)} |f_n(s,dx)W(w,x,y,z)\phi(x)\right)f_n(s,dy)f_n(s,dz)\\
&{}&\leq \int_0^t\int_0^{\infty}\int_0^{\infty}\left(\int_{\mathbb{R}_{+}\setminus B}|\frac{\Psi_n(t,x)}{\Psi_n(s,x)}-\frac{\Psi(t,x)}{\Psi(s,x)} |f_n(s,dx)W(w,x,y,z)\phi(x)\right)f_n(s,dy)f_n(s,dz)\\
&{}& C\epsilon,
\end{eqnarray*}
where C is some positive constant. Clearly this can be made arbitrarily small just by choosing $n$ large enough and $\epsilon$ small enough.
We thus conclude that the term $\Delta\mathcal{I}_{1,n}$ can be made as small as we wish. Let us now turn to $\mathcal{I}_n$ again.
\begin{eqnarray*}
\mathcal{I}_{1,n}&=&\Delta\mathcal{I}_{1,n}+\int_0^t\int_0^{\infty}\int_0^{\infty}\left[\int_0^{\infty}\frac{\Psi(t,x)}{\Psi(s,x)} W(w,x,y,z)f(s,dx)\right] f_n(s,dy)f_n(s,dz)ds\\
&{}&=\Delta\mathcal{I}_{1,n}+ \int_0^t ds\int_0^{\infty}f_n(s,dy)\int_0^{\infty}f_n(s,dz) F(t,s;y,z),
\end{eqnarray*}
where $$F(t,s;y,z)=\int_0^{\infty}\frac{\Psi(t,x)}{\Psi(s,x)} W(w,x,y,z)f(s,dx),$$ is a continuous and bounded function of $s,y,z.$ Thus:
\begin{eqnarray*}
\mathcal{I}_{1,n}&=& \Delta\mathcal{I}_{1,n}+\int_0^t ds\int_0^{\infty}f_n(s,dy)\int_0^{\infty}(f_n(s,dz)-f(s,dz)) F(t,s;y,z) + \int_0^t ds\int_0^{\infty}f_n(s,dy)\int_0^{\infty}f(s,dz) F(t,s;y,z)\\
&{}&= \Delta\mathcal{I}_{1,n}+\delta\mathcal{I}_{n}+ \int_0^t ds\int_0^{\infty}f_n(s,dy)\int_0^{\infty}f(s,dz) F(t,s;y,z),
\end{eqnarray*}
where $$\delta\mathcal{I}_{n}=\int_0^t ds\int_0^{\infty}f_n(s,dy)\int_0^{\infty}(f_n(s,dz)-f(s,dz)) F(t,s;y,z).$$
Evidently both $\delta\mathcal{I}_{n}$ and $\Delta\mathcal{I}_{1,n}$ can be made arbitrarily small while
$$\lim_{n\rightarrow\infty}\int_0^t ds\int_0^{\infty}f_n(s,dy)\int_0^{\infty}f(s,dz) F(t,s;y,z)=\int_0^t ds\int_0^{\infty}f(s,dy)\int_0^{\infty}f(s,dz) F(t,s;y,z).$$
Thus we finally have:

$$\lim_{n\rightarrow\infty}\mathcal{I}_{1,n}=\int_0^t ds\int_0^{\infty}f(s,dy)\int_{0}^{\infty}f(s,dz)\int_{0}^{\infty}\frac{\Psi(t,x)}{\Psi(s,x)}W(w,x,y,z)f(s,dx)=\mathcal{I}_1.$$
For the other cubic term can be written as follows:

\begin{eqnarray*}
\mathcal{I}_{2,n}&=& \int_0^t\int_0^{\infty}\int_0^{\infty}\int_0^{\infty} \frac{\Psi_n(t,x)}{\Psi_n(s,x)} W(w,x,y,z)\phi(x)f_n(s,dw)f_n(s,dy)f_n(s,dz)\mathbbm{1}(x=y+z-w\geq0)\\
&=& \int_0^t\int_0^{\infty}\int_0^{\infty}\left(\int_0^{y+z} \frac{\Psi_n(t,y+z-w)}{\Psi_n(s,y+z-w)} \phi(y+z-w)W(w,y+z-w,y,z)f_n(s,dw)\right) f_n(s,dy)f_n(s,dz).
\end{eqnarray*}

This is then evaluated in the same manner as $\mathcal{I}_{1,n}$ and the quadratic term is similarly dealt with, proving that the operator $\mathcal{T}$ is indeed continuous in the metric space $G^{\beta}_{\kappa,T_{*}}.$

\end{proof}



\begin{lemma}
Suppose $||f_0||_{G^{\beta}}<\kappa$ and $T<T_{*}$ is as in lemma $3$. The operator $\mathcal{T}$ is weakly compact in $G^{\beta}_{\kappa, T}.$
\end{lemma}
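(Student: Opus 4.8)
The plan is to deduce relative weak compactness of $\mathcal{T}(G^{\beta}_{\kappa,T})$ from the Arzel\`a--Ascoli characterization of compact subsets of $G^{\beta}_{\kappa,T}$ recalled above. By that characterization it suffices to check two things: first, that for each fixed $t\in[0,T]$ the slice $\{\mathcal{T}[f](t,\cdot): f\in G^{\beta}_{\kappa,T}\}$ is contained in a weakly compact subset of $G^{\beta}$; and second, that the family $\{t\mapsto\mathcal{T}[f](t,\cdot): f\in G^{\beta}_{\kappa,T}\}$ is equicontinuous in $t$ (for the metric $d$). The first point is immediate from the estimate \eqref{bound}: for $T<T_{*}$ the operator $\mathcal{T}$ maps $G^{\beta}_{\kappa,T}$ into itself, so every slice lies in $G^{\beta}_{\kappa}$, which is weakly compact by the Proposition proved above. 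Hence the whole content of the lemma is the uniform-in-$f$ equicontinuity in time, and in fact I expect to prove uniform Lipschitz continuity in $t$.

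To establish this, fix $0\le t_{1}<t_{2}\le T$ and a test function $\phi$ with $\|\phi\|_{BL}\le 1$ (so in particular $|\phi|\le 1$), and estimate $\bigl|\int_{0}^{\infty}\phi\,d\bigl(\mathcal{T}[f](t_{2})-\mathcal{T}[f](t_{1})\bigr)\bigr|$ using the explicit formula \eqref{evolnop}. I would split the difference into three pieces: (i) the change of the exponential factor multiplying $f(0,dx)$; (ii) the newly created contribution on $[t_{1},t_{2}]$, namely $\int_{t_{1}}^{t_{2}}ds\int\phi(x)\,e^{-\int_{s}^{t_{2}}a(x,\xi)d\xi}\,J[f](s,dx)$; and (iii) the change on $[0,t_{1}]$ of the exponential factor multiplying $J[f](s,dx)$. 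For (i) and (iii) I use the elementary bound $|e^{-A}-e^{-B}|\le|A-B|$ for $A,B\ge 0$, which controls (i) by $\int_{0}^{\infty}\bigl(\int_{t_{1}}^{t_{2}}a(x,\xi)\,d\xi\bigr)f(0,dx)$ and (iii) by $\int_{0}^{t_{1}}ds\int_{0}^{\infty}\bigl(\int_{t_{1}}^{t_{2}}a(x,\xi)\,d\xi\bigr)J[f](s,dx)$; for (ii) I simply use $|\phi|\le 1$ and bound the total mass $\int_{0}^{\infty}J[f](s,dx)$. Each of the three pieces is then linear in $(t_{2}-t_{1})$, provided one knows that $\sup_{\xi\le T}a(x,\xi)$ is integrable against $f(0,\cdot)$ and against $J[f](s,\cdot)$, uniformly in $s\le T$ and $f\in G^{\beta}_{\kappa,T}$, and that $\int_{0}^{\infty}J[f](s,dx)$ is uniformly bounded.

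The main technical step — and the only place requiring real work — is thus the bound on $\int_{0}^{\infty}a(x,\xi)\,\mu(dx)$ when $\mu$ ranges over $f(0,\cdot)$ and $J[f](s,\cdot)$, measures lying in a bounded ball of $G^{\beta}$. Here I would use the decomposition \eqref{adecomp}: the terms $S_{1}[f],S_{2}[f]$ are bounded pointwise in $x$ by $C(\|f\|_{G^{\beta}})$ (using $xG(x,w)\le w$ in both regimes $w\le x$ and $w\ge x$, and that the first moment of a measure in $G^{\beta}_{\kappa}$ is $O(\kappa)$), while the leading term contributes $2\sqrt{x}\,m(f)$; hence $\sup_{\xi\le T}a(x,\xi)\le 2\sqrt{x}\,m(f)+C(\kappa)$ and $\int a(x,\xi)\,\mu(dx)\le 2m(f)\int\sqrt{x}\,\mu(dx)+C(\kappa)\,\mu(\mathbb{R}_{+})$. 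For $\mu=f(0,\cdot)$ this is $2m(f)^{2}+C(\kappa)\,\mu(\mathbb{R}_{+})<\infty$; for $\mu=J[f](s,\cdot)$ one needs the finiteness, uniform in $s$ and $f$, of $\int\sqrt{x}\,J[f](s,dx)$ and of $\int J[f](s,dx)$, which follow from exactly the cubic and quadratic estimates (with test functions $\sqrt{x}$ and $1$) already carried out in the proof of \eqref{bound}, together with the decay encoded by $\|f\|_{G^{\beta}}\le\kappa$. Assembling the three pieces then gives $d\bigl(\mathcal{T}[f](t_{2}),\mathcal{T}[f](t_{1})\bigr)\le C(\kappa,m(f),e(f))\,(t_{2}-t_{1})$ uniformly over $f\in G^{\beta}_{\kappa,T}$, so $\mathcal{T}(G^{\beta}_{\kappa,T})$ is an equicontinuous family with values in the weakly compact set $G^{\beta}_{\kappa}$, hence relatively compact in $G^{\beta}_{\kappa,T}$; the same bound applied to any weakly compact $\mathcal{F}\subset G^{\beta}_{\kappa,T}$ shows $\mathcal{T}(\mathcal{F})$ is relatively compact, which is the assertion. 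I expect the only mildly delicate bookkeeping to be tracking the $\min$-structure of $W$ inside $J[f]$ when verifying finiteness of $\int\sqrt{x}\,J[f](s,dx)$, but this is of precisely the same nature as the computations already performed for \eqref{bound}.
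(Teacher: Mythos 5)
Your proposal is correct and takes essentially the same approach as the paper: the paper establishes the Lipschitz-in-time equicontinuity of $t\mapsto\int\phi\,\mathcal{T}[f](t,dx)$ by differentiating in $t$ and bounding the derivative via the pointwise bound $a(x,t)\lesssim\sqrt{x}\,m(f)+C(\kappa)$ together with the estimates from Lemma 3, then concludes by Arzel\`a--Ascoli, whereas you estimate the increment $\mathcal{T}[f](t_2)-\mathcal{T}[f](t_1)$ directly using $|e^{-A}-e^{-B}|\le|A-B|$, which is the integrated version of the same argument. Your write-up is a bit more explicit than the paper's about which moment bounds on $J[f](s,\cdot)$ and $f(0,\cdot)$ are actually being invoked, but the decomposition of $a$, the source of the key estimate, and the appeal to Arzel\`a--Ascoli are all the same.
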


\begin{proof}
For any $f\in G^{\beta}_{\kappa, T}$ and any continuous bounded function $\phi$, let us define the following function:
$$A_{\phi}[f](t)=\int_{0}^{\infty}\phi(x)\mathcal{T}[f](t,dx).$$
We now show that this function is Lipshitz continuous in time. Differentiating both sides we have:
\begin{eqnarray}\label{compact_bound}
\lvert \frac{\partial}{\partial t}A_{\phi}[f](t) \rvert&\leq& \int_0^{\infty}|a(x,t)|\exp\left(-\int_0^t a(x,s)ds\right) f(0,dx)\phi(x)+\\ \nonumber
&+&\int_0^{\infty}\phi(x)\int_0^t ds |a(x,t)|\exp\left(-\int_s^t a(x,\xi)d\xi\right) J[f](s,dx) + \int_0^{\infty}\phi(x)J[f](t,dx).
\end{eqnarray}
We now use: $$|a(x,t)|\leq 2\sqrt{x}m(f)+2x\int_0^x \frac{1}{3}\left(\frac{w}{x}\right)^{\frac{3}{2}}f(t,dw)+2x\int_x^{\infty}\left(\frac{1}{3}+\frac{w}{x}\right)f(t,dw)+2\int_0^{\infty}\int_0^{\infty}W(w,x,y,z)f(t,dw)f(t,dy),$$
and the bounds achieved in lemma $3$ to obtain that the right hand side of \eqref{compact_bound} is bounded for $0\leq t \leq T$.

Thus $A_{\phi}[f](t)$ is Lipshitz continuous and since the Lipshitz bound holds for any $f\in G^{\beta}_{\kappa, T}$, we can conclude that the family $\mathcal{T}[f](t,.)$ is equicontinuous.
Thus the map $\mathcal{T}$ is weakly compact.

\end{proof}

Lemmas $3$, $4$ and $5$ then imply the following local existence theorem:
\begin{theorem}
Let $f_0\in G^{\beta}$. Then there exists  $\kappa>||f_0||_{G^{\beta}}$ and a time $T>0$ depending on $||f_0||_{G^{\beta}}$ such that there exists, for $0\leq t\leq T$, a solution $f(t,.)\in G^{\beta}_{\kappa, T}$ of \eqref{BN} in the sense of \eqref{MMsoln} and \eqref{mesmildsoln}.
\end{theorem}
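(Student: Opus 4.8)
The plan is to realize the solution as a fixed point of the time-evolution operator $\mathcal{T}$ from \eqref{evolnop} on the set $G^{\beta}_{\kappa,T}$, invoking the Schauder--Tychonoff fixed point theorem in the form recorded after the definition of that set: a weakly sequentially continuous self-map of $G^{\beta}_{\kappa,T}$ sending weakly compact subsets to weakly compact subsets has a fixed point there, because $G^{\beta}_{\kappa,T}$ is convex and metrizable. First I would fix $f_0\in G^{\beta}$, put $\kappa:=2\|f_0\|_{G^{\beta}}$ so that $\kappa>\|f_0\|_{G^{\beta}}$, and apply Lemma $3$: for $\beta$ chosen large enough relative to $m(f_0)$ there is a time $T_*>0$, depending on $\kappa$ and hence on $\|f_0\|_{G^{\beta}}$, such that $\mathcal{T}$ maps $G^{\beta}_{\kappa,T}$ into itself whenever $T\le T_*$; fix such a $T$.

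Next I would collect the topological input. The single-time slice $G^{\beta}_{\kappa}$ is convex and weakly compact by the Proposition, so $G^{\beta}_{\kappa,T}=\{f\in C([0,T];G^{\beta}):\|f\|_{G^{\beta},T}\le\kappa\}$ is a nonempty convex subset of $C([0,T];\mathcal{M}_+(\mathbb{R}_+))$ under the topology of uniform-in-time weak convergence, and it was noted to be metrizable. By Lemma $4$, $\mathcal{T}$ is weakly sequentially continuous on $G^{\beta}_{\kappa,T}$, hence continuous for that metric. By Lemma $5$, $\mathcal{T}(G^{\beta}_{\kappa,T})$ is equicontinuous in $t$ with a modulus depending only on $\kappa$, $\beta$ and $m(f_0)$; combined with the weak compactness of the slices, the Arzel\`a--Ascoli theorem shows that $\mathcal{T}(G^{\beta}_{\kappa,T})$ is relatively compact in $G^{\beta}_{\kappa,T}$.

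These ingredients let the fixed point theorem apply, producing $f\in G^{\beta}_{\kappa,T}$ with $\mathcal{T}[f]=f$. Unwinding the definition of $\mathcal{T}$, this is precisely the identity \eqref{MMsoln}, and pairing it against an arbitrary $\phi\in C_b(\mathbb{R}_+)$ gives \eqref{mesmildsoln}; hence $f$ is a measure-valued mild solution of \eqref{BN} on $[0,T]$ lying in $G^{\beta}_{\kappa,T}$, with $\kappa>\|f_0\|_{G^{\beta}}$, which is the assertion.

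The main obstacle is not this final assembly, which is essentially bookkeeping once Lemmas $3$, $4$ and $5$ are in hand, but rather making sure the pieces fit together. The point deserving most care is that $G^{\beta}_{\kappa,T}$ be closed in the uniform-in-time weak topology, so that the abstract fixed point (equivalently, the limit of the Schauder iterates) does not leave $G^{\beta}_{\kappa,T}$; this is the uniform-in-$t$ analogue of the $\psi_m$-approximation argument from the proof of the Proposition. One should also confirm that weak sequential continuity on a metrizable domain is enough to run the fixed point theorem --- it is, since on a metric space sequential continuity coincides with continuity --- and that the mass $m(f_0)$ used in selecting $\beta$ really is the (conserved) mass of the iterates, which holds because $\mathcal{T}$ preserves mass.
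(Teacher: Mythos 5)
Your proposal follows exactly the paper's route: invoke Lemma 3 for self-mapping, Lemma 4 for weak sequential continuity, Lemma 5 for weak compactness, and conclude via the Schauder--Tychonoff fixed point theorem, then unwind the definition of $\mathcal{T}$ to obtain \eqref{MMsoln} and \eqref{mesmildsoln}. The paper's own proof is a terse two-sentence version of the same argument, so the proposal is correct and essentially identical in approach.
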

\begin{proof}
This is just a consequence of the fact that the time evolution operator $\mathcal{T}:G^{\beta}_{\kappa, T}\rightarrow G^{\beta}_{\kappa, T}$, since it is  weakly sequentially continuous  and weakly compact , must have a fixed point $f\in G^{\beta}_{\kappa, T}$ by the Schauder-Tychonoff theorem. It is then clear that this $f$ must satisfy \eqref{MMsoln} and \eqref{mesmildsoln} by virtue of the definition \eqref{evolnop}.
\end{proof}
\vspace{1in}

We will now prove that the obtained measure-valued solution conserves mass and energy.  For any $f\in G^{\beta}_{\kappa, T}$ let us define an associated density: $$g(., dx)=\sqrt{x}f(.,dx).$$ Then the mass and energy functionals associated with the measure $f(.,dx)$ are :
$$m(f)=\int_{\mathbb{R}_{+}} g(.,dx),$$ and $$e(f)=\int_{\mathbb{R}_{+}} x g(.,dx).$$
We now prove the following lemma:
\begin{lemma}
Let $f\in G^{\beta}_{\kappa, T}$ be a measure-valued solution of \eqref{BN} in the sense of \eqref{MMsoln} and \eqref{mesmildsoln}. Let $g$ be the associated density. Then:
\begin{eqnarray*}
\frac{\partial}{\partial t}\left(\int_{\mathbb{R}_{+}} g(t, dx)\right)=\frac{\partial}{\partial t}\left(\int_{\mathbb{R}_{+}}x g(t, dx)\right)=0,\qquad a.e.\quad t\in[0,T]
\end{eqnarray*}
\end{lemma}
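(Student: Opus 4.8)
The plan is to reduce the statement to the classical microscopic conservation of the collision invariants $1$ and $x$ for the Boltzmann--Nordheim operator; the only genuinely new point is that every integral that appears converges because $f(t,\cdot)$ lies in $G^{\beta}_{\kappa}$ and hence has exponentially decaying tails, and that the time differentiation of a measure-valued mild solution must be justified. First I would derive a weak evolution identity. Since $f=\mathcal{T}[f]$, the argument behind \eqref{compact_bound} (which used the a priori estimates leading to \eqref{bound}) shows that for every bounded continuous $\psi$ the map $t\mapsto\int_{0}^{\infty}\psi(x)\,f(t,dx)$ is Lipschitz on $[0,T]$. Differentiating \eqref{mesmildsoln} in $t$, the derivative of the first summand of \eqref{MMsoln} contributes $-a(x,t)$ times that summand, while the Leibniz rule applied to the second summand contributes $-a(x,t)$ times that summand plus the endpoint term $J[f](t,dx)$ from the upper limit; summing and using that the two summands add up to $f(t,dx)$, one gets for a.e. $t\in[0,T]$
\[
\frac{d}{dt}\int_{0}^{\infty}\psi(x)\,f(t,dx)=\int_{0}^{\infty}\psi(x)\,J[f](t,dx)-\int_{0}^{\infty}\psi(x)\,a(x,t)\,f(t,dx)=:\int_{0}^{\infty}\psi(x)\,Q[f](t,dx),
\]
and, integrating in time, $\int\psi\,f(t,dx)-\int\psi\,f(0,dx)=\int_{0}^{t}\big(\int\psi\,Q[f](s,dx)\big)\,ds$.

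Next I would pass from bounded $\psi$ to the weights that yield mass and energy. Since $g(t,dx)=\sqrt{x}\,f(t,dx)$, one has $\int g(t,dx)=\int\sqrt{x}\,f(t,dx)$ and $\int x\,g(t,dx)=\int x^{3/2}\,f(t,dx)$; I would take $\psi_{R}(x)=\psi(x)\,\eta(x/R)$ with $\eta$ continuous, $\eta\equiv1$ on $[0,1]$, $\eta\equiv0$ on $[2,\infty)$, and $\psi(x)=\sqrt{x}$ or $\psi(x)=x^{3/2}$. On the left, dominated convergence together with $\sup_{R}\int_{(R,R+1)}e^{\beta x}f(t,dx)\le\kappa$ gives $\int\psi_{R}\,f(t,dx)\to\int\psi\,f(t,dx)<\infty$ as $R\to\infty$ — essentially the content of the mass--energy convergence lemma. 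On the right, one checks that $s\mapsto\int x^{3/2}\,J[f](s,dx)$ and $s\mapsto\int x^{3/2}\,a(\cdot,s)\,f(s,dx)$ lie in $L^{1}(0,T)$: since $W\le1$ and, by \eqref{adecomp} and the definition of $J[f]$, each is a sum of products of at most three factors $f(s,d\cdot)$ (possibly with one Lebesgue factor over a bounded interval, coming from the Bose ``$+1$'' and the linear terms), these quantities are dominated by finitely many integrals of the form $\iint (y+z)^{p}\,f(s,dy)\,f(s,dz)$ with $p\le 5/2$, bounded uniformly for $s\in[0,T]$ by the exponential decay. Dominated convergence then applies on the right as well, and letting $R\to\infty$ yields $\int g(t,dx)-\int g(0,dx)=\int_{0}^{t}\big(\int\sqrt{x}\,Q[f](s,dx)\big)ds$ and $\int x\,g(t,dx)-\int x\,g(0,dx)=\int_{0}^{t}\big(\int x^{3/2}\,Q[f](s,dx)\big)ds$.

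It then remains to show $\int_{0}^{\infty}\sqrt{x}\,\phi(x)\,Q[f](s,dx)=0$ for $\phi\equiv1$ and for $\phi(x)=x$. Testing $Q[f]$ against $\sqrt{x}\,\phi(x)$ turns the kernel $W$ into the fully symmetric weight $\Phi(x,w,y,z)=\min(\sqrt{x},\sqrt{w},\sqrt{y},\sqrt{z})$ of \eqref{S2E2} — this is precisely why $g=\sqrt{x}f$ obeys the symmetric form \eqref{S2E1} — and, after the change of variables that recasts the Bose and linear terms as the same symmetric three-fold product against the constraint $\mathbbm{1}(x+w=y+z)$, the symmetrization over the four arguments (legitimate because $\Phi$ and this constraint are invariant under $x\leftrightarrow w$, under $y\leftrightarrow z$ and under $(x,w)\leftrightarrow(y,z)$, while the collision combination changes sign under the last of these) replaces the factor $\phi(x)$ by $\tfrac{1}{4}\big(\phi(x)+\phi(w)-\phi(y)-\phi(z)\big)$. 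For $\phi\equiv1$ this is identically zero; for $\phi(x)=x$ it equals $x+w-y-z$, which vanishes on the support of $\mathbbm{1}(x+w=y+z)$. Hence both time integrals vanish identically, so $\int g(t,dx)$ and $\int x\,g(t,dx)$ are constant on $[0,T]$ and their derivatives vanish for a.e. $t\in[0,T]$.

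The algebraic identity in the last step is classical, so I expect the real obstacles to be elsewhere: justifying the $t$-differentiation of the measure-valued mild solution and, above all, interchanging the limit $R\to\infty$ with the time integral — both resting on the uniform-in-$s$ bounds furnished by membership in $G^{\beta}_{\kappa}$ — and carefully matching the measure-valued definitions of $a$ and $J[f]$, with their Lebesgue factors, to the symmetric weak form of the collision operator so that the relabelling argument genuinely applies.
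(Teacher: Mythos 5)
Your proposal is correct and follows essentially the same route as the paper: differentiate the mild-solution identity to get a weak evolution equation $\frac{d}{dt}\int\psi\,f(t,dx)=\int\psi\,Q[f](t,dx)$; pass from bounded test functions to the weights $\sqrt{x}$ and $x^{3/2}$ by truncation and dominated convergence, with the uniform $G^{\beta}_{\kappa}$ tail bound controlling all integrals; and finish by the collision invariance of $1$ and $x$ under the symmetric kernel $\Phi=\min(\sqrt{x},\sqrt{w},\sqrt{y},\sqrt{z})$.

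The one point where you diverge, and arguably improve on the write-up in the paper, is the last algebraic step. You symmetrize over all four arguments to produce the fully symmetric collision-invariant factor $\tfrac{1}{4}\bigl(\phi(x)+\phi(w)-\phi(y)-\phi(z)\bigr)$, which vanishes pointwise on $\{x+w=y+z\}$ for both $\phi\equiv 1$ and $\phi(x)=x$, so no further argument is needed. The paper instead only collects the gain and loss terms to the partially symmetrized combination $\bigl[\phi(w)+\phi(x+y-w)-2\phi(x)\bigr]$ under the constraint $x+y=z+w$; for $\phi\equiv 1$ this is obviously zero, but for $\phi(x)=x$ it reduces to $y-x$, which is nonzero pointwise and only vanishes after integration against the $x\leftrightarrow y$ symmetry of $\Phi\,\frac{g(dx)g(dw)g(dy)}{\sqrt{wxy}}\mathbbm{1}(x+y=z+w)$ — a step the paper leaves implicit. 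Your fully symmetrized form is thus slightly more transparent, at the cost of keeping track of one extra relabelling. Both are valid, and the dominated/monotone convergence justifications you sketch are the same as those the paper supplies via the cutoffs $\phi_n(x)=x\zeta_n(x)$ and the $p\le 5/2$ moment bounds from $G^{\beta}_{\kappa}$.
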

\begin{proof}
We will prove this lemma in two steps.

First let us note the equation below, for any $\phi\in C([0,T]\times[0,\infty))$, which follows from the definition of $g$:
\begin{eqnarray*}
&&\int_{0}^{\infty}g(t,dx)\phi(x,t) \\&=& \int_{0}^{\infty}\phi(x,t)\Psi(t,x)g(0,dx)+\int_0^{\infty}\int_0^{\infty}\int_0^{\infty}\int_0^t ds \left(\frac{\Psi(t,x)}{\Psi(s,x)}\right)\phi(x,t)\Phi(w,x,y,z)\frac{g(s,dy)g(s,dz)dx}{\sqrt{yz}}\mathbbm{1}(w=y+z-x\geq 0)\\
&+& \int_0^{\infty}\int_0^{\infty}\int_0^{\infty}\int_0^t ds \left(\frac{\Psi(t,x)}{\Psi(s,x)}\right)\phi(x,t)\Phi(w,x,y,z)\frac{g(s,dx)g(s,dy)g(s,dz)}{\sqrt{xyz}}\mathbbm{1}(w=y+z-x\geq 0)\\
&+& \int_0^{\infty}\int_0^{\infty}\int_0^{\infty}\int_0^t ds \left(\frac{\Psi(t,x)}{\Psi(s,x)}\right)\phi(x,t)\Phi(w,x,y,z)\frac{g(s,dw)g(s,dy)g(s,dz)}{\sqrt{wyz}}\mathbbm{1}(x=y+z-w\geq 0),
\end{eqnarray*}
where $\Phi(w,x,y,z)=\min(\sqrt{w},\sqrt{x},\sqrt{y},\sqrt{z})$ and $$\Psi(s,x)=e^{-\int_0^{t}a(s,x)ds}.$$
Since $a(t,x)$ is bounded on compact sets in $[0,T]\times[0,\infty)$, we can differentiate both sides with respect to $t$ and obtain:
\begin{eqnarray*}
\frac{\partial}{\partial t}\left(\int_0^{\infty}\phi(x,t) g(t, dx)\right)&=&\int_0^{\infty}g(t,dx)\frac{\partial}{\partial t}\phi(x,t) - \int_0^{\infty}\phi(x,t)a(x,t)g(t,dx)\\
&+&\int_0^{\infty}\int_0^{\infty}\int_0^{\infty} \phi(x,t)\Phi(w,x,y,z)\frac{g(t,dy)g(t,dz)dx}{\sqrt{yz}}\mathbbm{1}(w=y+z-x\geq 0) \\
&+&\int_0^{\infty}\int_0^{\infty}\int_0^{\infty}\phi(x,t)\Phi(w,x,y,z)\frac{g(t,dx)g(t,dy)g(t,dz)}{\sqrt{xyz}}\mathbbm{1}(w=y+z-x\geq 0)\\
&+& \int_0^{\infty}\int_0^{\infty}\int_0^{\infty}\phi(x,t)\Phi(w,x,y,z)\frac{g(t,dw)g(t,dy)g(t,dz)}{\sqrt{wyz}}\mathbbm{1}(x=y+z-w\geq 0)
\end{eqnarray*}
We now plug into the above equation the definition \eqref{adef} for $a(x,t)$ and observe that in the resulting equation the cubic and the quadratic terms can be collected and combined as follows, after a change of variables:
\begin{eqnarray*}
&&\int_0^{\infty}\int_0^{\infty}\int_0^{\infty} \phi(x,t)\Phi(w,x,y,z)\frac{g(t,dx)g(t,dy)g(t,dz)}{\sqrt{xyz}}\mathbbm{1}(w=y+z-x\geq 0)\\
&&+ \int_0^{\infty}\int_0^{\infty}\int_0^{\infty} \phi(x,t)\Phi(w,x,y,z)\frac{g(t,dw)g(t,dy)g(t,dz)}{\sqrt{wyz}}\mathbbm{1}(x=y+z-w\geq 0)\\
&&- \int_0^{\infty}\int_0^{\infty}\int_0^{\infty} \phi(x,t)\Phi(w,x,y,z)\frac{g(t,dx)g(t,dw)g(t,dy)}{\sqrt{wxy}}\mathbbm{1}(z=x+w-y\geq 0)\\
&&-\int_0^{\infty}\int_0^{\infty}\int_0^{\infty} \phi(x,t)\Phi(w,x,y,z)\frac{g(t,dw)g(t,dx)g(t,dz)}{\sqrt{wxz}}\mathbbm{1}(y=x+w-z\geq 0)\\
&=&\int_0^{\infty}\int_0^{\infty}\int_0^{\infty}\left[\phi(w,t)+\phi(x+y-w,t)-2\phi(x,t)\right]\Phi(w,x,y,z)\frac{g(t,dx)g(t,dw)g(t,dy)}{\sqrt{wxy}}\mathbbm{1}(x+y=z+w).
\end{eqnarray*}
Similarly, for the quadratic terms:
\begin{eqnarray*}
&&\int_0^{\infty}\int_0^{\infty}\int_0^{\infty}\phi(x,t)\Phi(w,x,y,z)\frac{g(t,dy)g(t,dz)dx}{\sqrt{yz}}\mathbbm{1}(w=y+z-x\geq 0)\\
&& - \int_0^{\infty}\int_0^{\infty}\int_0^{\infty}\phi(x,t)\Phi(w,x,y,z)\frac{g(t,dx)g(t,dw)dz}{\sqrt{wx}}\mathbbm{1}(y=x+w-z\geq 0)\\
&=&\int_0^{\infty}\int_0^{\infty}\int_0^{\infty}\left[\phi(w,t)+\phi(x+y-w,t)-2\phi(x,t)\right]\Phi(w,x,y,z)\frac{g(t,dx)g(t,dw)g(t,dy)}{\sqrt{wxy}}\mathbbm{1}(x+y=z+w)
\end{eqnarray*}

Thus:
\begin{eqnarray}\label{wtevoln}
&&\frac{\partial}{\partial t}\left(\int_0^{\infty}\phi(x,t) g(t, dx)\right)=\int_0^{\infty}g(t,dx)\frac{\partial}{\partial t}\phi(x,t)+\\ \nonumber
&+&\int_0^{\infty}\int_0^{\infty}\int_0^{\infty}\left[\phi(w,t)+\phi(x+y-w,t)-2\phi(x,t)\right]\Phi(w,x,y,z)\frac{g(t,dx)g(t,dw)g(t,dy)}{\sqrt{wxy}}\mathbbm{1}(x+y=z+w)\\ \nonumber
&+&\int_0^{\infty}\int_0^{\infty}\int_0^{\infty}\left[\phi(w,t)+\phi(x+y-w,t)-2\phi(x,t)\right]\Phi(w,x,y,z)\frac{g(t,dx)g(t,dw)g(t,dy)}{\sqrt{wxy}}\mathbbm{1}(x+y=z+w)\\ \nonumber
&=& \int_0^{\infty}g(t,dx)\frac{\partial}{\partial t}\phi(x,t) + G_3[\phi]+G_2[\phi] ,
\end{eqnarray}

where $G_3$ and $G_2$ denote the cubic and quadratic terms in $g$ respectively.
We now apply equation \eqref{wtevoln} to a sequence of test functions defined as follows:
$$\phi_n(x) = x\zeta_n(x),$$ where,
\begin{eqnarray*}
\zeta_n(x) &=& 1,\quad x\leq n\\
&=& 0,\quad x\geq n+1\\
\text{and}\qquad \zeta'_{n}&\leq& 0.
\end{eqnarray*}

Then it easy to see that the finiteness of the energy integral implies:
$$\lim_{n\rightarrow\infty}G_3[\phi_n]=G_3[\tilde{\phi}]=0,$$
where $\tilde{\phi}(x)=x.$

To conclude that the integral $G_2[\phi_n]$ is convergent as well, it is enough to notice the following estimates:
\begin{eqnarray*}
&&\int_0^{\infty}\Phi(w,x,y,z)\left[\phi_n(w,t)+\phi_n(x+y-w,t)-2\phi(x,t)\right]\mathbbm{1}(x+y=w+z)\\
&\leq& \min(\sqrt{x},\sqrt{y})\int_0^{x+y}dw\left[\phi_n(w,t)+\phi_n(x+y-w,t)-2\phi(x,t)\right]\\
&\leq& \min(\sqrt{x},\sqrt{y})(x+y)^2,
\end{eqnarray*}
which in turn implies:
\begin{eqnarray*}
G_2[\phi_n]\leq C\int_0^{\infty}y^{\frac{3}{2}}g(t,dy)\int_0^y g(t,dx).
\end{eqnarray*}
The right hand side is finite, since the moments of $f\in G^{\beta}_{\kappa}$ are finite.
Thus one can the limit of the quadratic term as:
$$\lim_{n\rightarrow\infty}G_2[\phi_n]=G_2[\tilde{\phi}]=0,$$
as before.

Thus equation \eqref{wtevoln} implies, when we evaluate it for the test function $\phi_n$ and take the limit:
\begin{eqnarray*}
\frac{\partial}{\partial t}\left(\int_0^{\infty}\tilde{\phi}(x,t) g(t, dx)\right)=0\qquad a.e.\quad t\in[0,T],
\end{eqnarray*}
where $\tilde{\phi}(x)=x.$

This proves the conservation of energy. The conservation of mass follows by a similar argument.

\end{proof}

\textbf{Acknowledgment} This work has been supported by the CRC 1060 ``The Mathematics of Emergent Effects" at the University of Bonn, that is funded through the German Science Foundation (DFG).

\bigskip

\bigskip

\bigskip

\end{document}